\newtheorem{theorem}{Theorem}
\newtheorem{lemma}[theorem]{Lemma}
\newtheorem{remark}[theorem]{Remark}
\newtheorem{claim}[theorem]{Claim}
\newcommand{\size}[1]{\ensuremath{|#1|}}
\newcommand{\ceil}[1]{\ensuremath{\lceil#1\rceil}}
\newcommand{\floor}[1]{\ensuremath{\lfloor#1\rfloor}}
\newcommand{\lrA}[1]{\ensuremath{\left(#1\right)}}
\def\OPT{\mbox{OPT}}
\def\C{\mathcal{C}}
\def\P{\mathcal{P}}
\def\Q{\mathcal{Q}}
\def\B{\mathcal{B}}
\newcommand{\EE}[1]{\ensuremath{\mathbb{E}[#1]}}
\newcommand*\circled[1]{\tikz[baseline=(char.base)]{
            \node[shape=circle,draw,minimum size=4mm, inner sep=0pt] (char) {#1};}}
\title{An Improved Algorithm for a Bipartite Traveling Tournament in Interleague Sports Scheduling}
\author
{
Jingyang Zhao\\
University of Electronic Science and Technology of China\\
\texttt{jingyangzhao1020@gmail.com}
\and
Mingyu Xiao\\
University of Electronic Science and Technology of China\\
\texttt{myxiao@uestc.edu.cn}
}
\date{}
\begin{document}
\maketitle
\begin{abstract}
The bipartite traveling tournament problem (BTTP)
addresses inter-league sports scheduling, which aims to design a feasible bipartite tournament between two $n$-team leagues under some constraints such that the total traveling distance of all participating teams is minimized. Since its introduction, several methods have been developed to design feasible schedules for NBA, NPB and so on. In terms of solution quality with a theoretical guarantee, previously only a $(2+\varepsilon)$-approximation is known for the case that $n\equiv 0 \pmod 3$. Whether there are similar results for the cases that $n\equiv 1 \pmod 3$ and $n\equiv 2 \pmod 3$ was asked in the literature. In this paper, we answer this question positively by proposing a $(3/2+\varepsilon)$-approximation algorithm for any $n$ and any constant $\varepsilon>0$, which also improves the previous approximation ratio for the case that $n\equiv 0 \pmod 3$.
\end{abstract}

\maketitle

\section{Introduction}
The traveling tournament problem (TTP), introduced by Easton~\emph{et al.}~\cite{easton2001traveling}, is a well-known benchmark problem in the field of sports scheduling~\cite{kendall2010scheduling}.
This problem aims to find a double round-robin tournament, minimizing the total traveling distance of all participating teams.
In a double round-robin tournament involving $n$ teams (where $n$ is even), each team plays two games against each of the other $n-1$ teams that includes one home game at its own home venue and one away game at its opponent's home venue, and on each day each team can only play one game. Moreover, all games should be scheduled on $2(n-1)$ consecutive days, subject to several constraints on the maximum number of consecutive home/away games for each team, ensuring a balanced tournament arrangement. In the problem, we consider that the distance is a semi-metric, i.e., it satisfies the symmetry and triangle inequality properties.
An overview of TTP and its variants, along with their various applications on sports scheduling, 
can be found in~\cite{DBLP:journals/eor/BulckGSG20,duran2021sports}.

The bipartite traveling tournament problem (BTTP), introduced by Hoshino and Kawarabayashi~\cite{Bipartite-conference}, is an inter-league extension of TTP. Given two $n$-team leagues, it asks for a distance-optimal double round-robin bipartite tournament between the leagues, where every team in one league plays one home game and one away game against each team in another league.
It also requires that each team plays only one game on each day, and all games need to be scheduled on $2n$ consecutive days.
For TTP/BTTP, the double round-robin (bipartite) tournament is subject to the following three basic constraints or assumptions:
\begin{itemize}
\item \emph{No-repeat}: No pair of teams can play against each other in two consecutive games.
\item \emph{direct-traveling}: Each team travels directly from its game venue on the $i$-th day to its game venue on the $(i+1)$-th day, where we assume that all teams are initially at home before the first game starts and will return home after the last game ends.
\item \emph{Bounded-by-3}: Each team can play at most 3-consecutive home games or away games.
\end{itemize}
In the last constraint, we require that the maximum number of consecutive home/away games for each team is at most three. 
This is the most extensively studied 
case~\cite{lim2006simulated,anagnostopoulos2006simulated,goerigk2014solving}.
The case that at most two consecutive home/away games are allowed is also studied in some references~\cite{thielen2012approximation,DBLP:conf/atmos/ChatterjeeR21,DBLP:conf/ijcai/ZhaoX21}. 
A small number of consecutive home/away games requires teams to return home frequently, which may make the tournament balanced at a cost of a possible longer traveling distance.   

As indicated in a previous study~\cite{Bipartite-jornal}, BTTP has potential applications in sports scheduling, such as for events like the Davis Cup, the biennial Ryder Cup, the National Basketball Association (NBA), and the Nippon Professional Baseball (NPB). 
Specifically, the NPB inter-league scheduling problem is precisely BTTP for the case $n = 6$~\cite{Bipartite-jornal}.
Exploring BTTP also holds promise for advancing the theoretical development of enhanced algorithms for TTP. For instance, Zhao and Xiao~\cite{pathpacking2023} utilized an approximation algorithm for BTTP between two groups by grouping teams and achieved an efficient polynomial-time approximation
scheme for a special case of TTP where all teams are positioned along a line. BTTP possesses a simpler structure, and scheduling for TTP can be derived from scheduling for BTTP using a divide-and-conquer method~\cite{DBLP:conf/atmos/ChatterjeeR21,zmor}. Therefore, effective algorithms for BTTP have the potential to yield effective algorithms for TTP as well.

\subsection{Related Work}
Both TTP and BTTP are difficult optimization problems, and their NP-hardness has been established in~\cite{thielen2011complexity,Bipartite-conference}.
In the online benchmark~\cite{trick2007challenge,DBLP:journals/eor/BulckGSG20}, many instances of TTP with more than ten teams have not been completely solved even by using high-performance machines.
TTP and BTTP have been extensively studied both in theory~\cite{hoshino2012linear,hoshino2013approximation,westphal2014,DBLP:conf/mfcs/XiaoK16,imahori20211+,DBLP:conf/cocoon/ZhaoX21,zhao20255} and practice~\cite{easton2003solving,di2007composite,DBLP:conf/aaai/HentenryckV07,hoshino2011distance,Bipartite-conference,Bipartite-jornal,DBLP:journals/anor/GoerigkW16,frohner2023approaching}.

TTP and BTTP are also rich problems in approximation algorithms. An algorithm is called an $\alpha$-approximation algorithm if it can generate a feasible schedule in polynomial time such that the total traveling distance of all teams is within $\alpha$ times the optimal.
For TTP, Miyashiro \emph{et al.}~\cite{miyashiro2012approximation} first proposed a $(2+\varepsilon)$-approximation algorithm where $\varepsilon$ is an arbitrary fixed constant, and the ratio was later improved to $(1.667+\varepsilon)$ by Yamaguchi \emph{et al.}~\cite{yamaguchi2009improved}, and $(1.598+\varepsilon)$ by Zhao \emph{et al.}~\cite{zhao2022improved}.
For BTTP, the only known result is a $(2+\varepsilon)$-approximation algorithm for the case that $n\equiv0\pmod3$, proposed by Hoshino and Kawarabayashi~\cite{hoshino2013approximation}, and whether there exist similar algorithms for the cases that $n\equiv1\pmod3$ and $n\equiv2\pmod3$ was asked in this paper.


\subsection{Our Results}
In this paper, we design a new algorithm for BTTP and prove that our algorithm can achieve an approximation ratio of $(3/2+\varepsilon)$ for any $n$ and any constant $\varepsilon>0$. This not only positively answers the open question for the cases that $n\equiv1\pmod3$ and $n\equiv2\pmod3$  but also improves the previous best ratio of $(2+\varepsilon)$ for the case that 
$n\equiv0\pmod3$~\cite{hoshino2013approximation}.

To achieve our improvement, in Section~\ref{SC.3}, we propose a novel \emph{3-path construction}: given any (feasible) 3-path packing, it can compute a feasible schedule for BTTP in polynomial time.
In Section~\ref{SC.4}, we do some analysis. 
Specifically, in Section~\ref{SC.4.2}, we introduce a novel lower bound for any optimal solution to BTTP, which is related to the minimum weight cycle packing (i.e., a set of cycles where each cycle has an order of at least 3). 
In Section~\ref{SC.4.1}, we analyze the quality of the 3-path construction and show that the quality of the solution produced by the 3-path construction is closely related to the quality of the used 3-path packing. Therefore, the 3-path construction reduces BTTP to the task of finding a good 3-path packing.
In Section~\ref{SC.4.3}, we use the minimum weight cycle packing to obtain a good 3-path packing and show that by applying this 3-path packing to our 3-path construction, we can get a schedule with an approximation ratio of at most $(3/2+\varepsilon)$ in polynomial time.

In addition to the theoretical results, in Section~\ref{SC.5}, we explore the practical applications of our 3-path construction. 
We create a new instance based on the real situation of NBA for BTTP with 32 teams ($n=16$) and test our algorithm on this instance. Note that the previous algorithm~\cite{hoshino2013approximation} only works for the case that $n\equiv0\pmod3$, while it holds $n\equiv1\pmod3$ in the new instance. 
Then, we extend our 3-path construction to a more efficient \emph{3-cycle construction}. 
By using simple swapping heuristics and applying these two constructions to the new instance, experimental results show that our algorithms can be implemented within 2 seconds, and the quality of the solutions is much better than the expected $1.5$-approximation ratio. Specifically, by using the well-known \emph{Independent Lower Bound}~\cite{easton2003solving}, we show that the gaps between our results and the optimal are at most $24.66\%$ and $9.42\%$ for the 3-path and 3-cycle constructions, respectively.

\section{Notations}
An instance of BTTP can be presented by a complete graph $G=(V=X\cup Y, E, w)$ with $2n$ vertices representing $2n$ teams, where $X=\{x_0,\dots,x_{n-1}\}$ and $Y=\{y_0,\dots,y_{n-1}\}$ are two $n$-team leagues, and $w$ is a non-negative semi-metric weight function on the edges in $E$ that satisfies the symmetry and triangle inequality properties, i.e., $w(x,y)+w(y,z)\geq w(x,z)=w(z,x)$ for any $x,y,z\in V$. 
The weight $w(x,y)$ of edge $xy\in E$ represents the distance between the homes of teams $x$ and $y$. 
We also extend the function to a set of edges, i.e., we let $w(E')\coloneqq\sum_{e\in E'} w(e)$ for any $E'\subseteq E$, and to a subgraph $G'$ of $G$, i.e., we let $w(G')$ be the total weight of all edges in $G'$. 
Given any $V'\subseteq V$, the complete graph induced by $V'$ is denoted by $G[V']$.

Given two vertices/teams $x\in X$ and $y\in Y$, and two sets $X'\subseteq X$ and $Y'\subseteq Y$, we define the following notations. 
We use $x\rightarrow y$ to denote a game between $x$ and $y$ at the home of $y$, and $x\leftrightarrow y$ to denote two games between $x$ and $y$ including one game at the home of $x$ and one game at the home of $y$.
We use $E_{Y'}(x)$ (resp., $E_{X'}(y)$) to denote the set of edges in $G$ between the vertex $x$ (resp., $y$) and one vertex in $Y'$ (resp., $X'$), i.e., $E_{Y'}(x)=\{xy\mid y\in Y'\}$.
We also let $\delta_{Y'}(x) = w(E_{Y'}(x))$ and $\delta_{Y'}(X')=\sum_{x\in X'}\delta_{Y'}(x)$ (resp., $\delta_{X'}(y) = w(E_{X'}(y))$ and $\delta_{X'}(Y')=\sum_{y\in Y'}\delta_{X'}(y)$). Note that $\delta_{Y'}(X')=\delta_{X'}(Y')$.

Two subgraphs or sets of edges are \emph{vertex-disjoint} if they do not share a common vertex.
An $l$-cycle $C=x_1x_2\dots x_lx_1$ is a simple cycle on $l$ different vertices $\{x_1,\dots,x_l\}$. 
It consists of $l$ edges $\{x_1x_2,\dots,x_lx_1\}$, and its \emph{order} is defined as $\size{C}\coloneqq l$, representing the number of vertices it contained.
A cycle packing in a graph is a set of vertex-disjoint cycles, where the order of each cycle is at least three and the cycles cover all vertices of the graph.
The minimum weight cycle packing in a graph with $n$ vertices can be found in $O(n^3)$ time~\cite{hartvigsen1984extensions}.
Similarly, an $l$-path $x_1x_2\dots x_l$ is a simple path on $l$ different vertices $\{x_1,\dots,x_l\}$. It consists of $l-1$ edges $\{x_1x_2,\dots,x_{l-1}x_l\}$, and the vertices $x_1$ and $x_l$ are called its \emph{terminals}.
An \emph{$l$-path packing} in a graph is a set of vertex-disjoint $l$-paths, where the paths cover all vertices of the graph.

A \emph{walk} is a sequence of vertices where each consecutive pair of vertices is connected by an edge, and its weight is defined as the total weight of the edges traversed in the sequence.
A walk is \emph{closed} if the first and the last vertices are the same.
In a solution of BTTP, every team $v\in X\cup Y$ has an \emph{itinerary}, which is a closed walk starting and ending at $v$. 
This itinerary could be decomposed into several minimal closed walks, each starting and ending at $v$, referred to as \emph{trips}. Each trip is an $l$-cycle containing $v$ with $2\leq l\leq 4$, and all trips/cycles share only one common vertex $v$.
For example, Table~\ref{example-solution} shows a solution for leagues $X=\{x_0,x_1,x_2\}$ and $Y=\{y_0,y_1,y_2\}$, where home games are marked in bold. We can get that $x_2$ has an itinerary $x_2y_0y_1y_2x_2$, and $y_2$ has an itinerary $y_2x_1x_2y_2x_0y_2$, and $y_2$ has two trips $y_2x_1x_2y_2$ and $y_2x_0y_2$ (a 3-cycle and a 2-cycle sharing $y_2$ only).

\begin{table}[ht]
\centering
\begin{tabular}{c|*{6}{c}}
  & $0$ & $1$ & $2$ & $3$ & $4$ & $5$\\
\hline
  $x_0$ & $y_0$ & $y_1$ & $y_2$ & $\pmb{y_0}$ & $\pmb{y_1}$ & $\pmb{y_2}$\\
  $x_1$ & $\pmb{y_2}$ & $y_0$ & $y_1$ & $y_2$ & $\pmb{y_0}$ & $\pmb{y_1}$\\
  $x_2$ & $\pmb{y_1}$ & $\pmb{y_2}$ & $y_0$ & $y_1$ & $y_2$ & $\pmb{y_0}$\\
\hline
  $y_0$ & $\pmb{x_0}$ & $\pmb{x_1}$ & $\pmb{x_2}$ & $x_0$ & $x_1$ & $x_2$\\
  $y_1$ & $x_2$ & $\pmb{x_0}$ & $\pmb{x_1}$ & $\pmb{x_2}$ & $x_0$ & $x_1$\\
  $y_2$ & $x_1$ & $x_2$ & $\pmb{x_0}$ & $\pmb{x_1}$ & $\pmb{x_2}$ & $x_0$\\
\end{tabular}
\caption{
A solution for BTTP with two leagues $X=\{x_0,x_1,x_2\}$ and $Y=\{y_0,y_1,y_2\}$, where home games are marked in bold.
}
\label{example-solution}
\end{table}

Fix a constant $\varepsilon>0$.
We want to consider a 3-path packing in $G$. Since $n$ may not be divisible by 3, we will remove a small number of vertices (as we will see that the number is related to $\varepsilon$) in $X$ and $Y$ to obtain two new sets $X_\varepsilon\subseteq X$ and $Y_\varepsilon\subseteq Y$ such that the number of vertices in $X_\varepsilon$ and $Y_\varepsilon$, denoted as $n_\varepsilon$, are the same and are divisible by 3. Hence, there always exist 3-path packings in $G[X_\varepsilon]$ and $G[Y_\varepsilon]$. We also let $\overline{X_\varepsilon}\coloneqq X\setminus X_\varepsilon$ and $\overline{Y_\varepsilon}\coloneqq Y\setminus Y_\varepsilon$. The graph $G_\varepsilon\coloneqq G[X_\varepsilon\cup Y_\varepsilon]$ is called the \emph{core} graph of $G$ because we will see that the quality of our schedule is only dominated by the total traveling distance of teams in $X_\varepsilon\cup Y_\varepsilon$, i.e., the total traveling distance of teams in $\overline{X_\varepsilon}\cup\overline{Y_\varepsilon}$ is small. We denote the minimum weight cycle packing in $G[X_\varepsilon]$ (resp., $G[Y_\varepsilon]$) by $\C_{X_\varepsilon}$ (resp., $\C_{Y_\varepsilon}$).
And, we denote the weight (i.e., the total traveling distance of all teams) of an optimal solution for BTTP by $\OPT$. 


\section{The 3-Path Construction of the Schedule}\label{SC.3}
Assuming we are given a 3-path packing $\P_{X_\varepsilon}$ in $G[X_\varepsilon]$ and a 3-path packing $\P_{Y_\varepsilon}$ in $G[Y_\varepsilon]$, to construct a schedule that minimizes the total traveling distance of teams in $X_\varepsilon \cup Y_\varepsilon$, the main idea of the 3-path construction is to ensure every team in $X_\varepsilon$ (res., $Y_\varepsilon$) plays 3-consecutive away games along every 3-path in $\P_{Y_\varepsilon}$ (resp., $\P_{X_\varepsilon}$) from one terminal to another.

Consider an \emph{ideal} schedule where every team in $Y_\varepsilon$ (resp., $X_\varepsilon$) plays 3-consecutive away games along every 3-path in $\P_{X_\varepsilon}$ (resp., $\P_{Y_\varepsilon}$). Then, every team in $Y_\varepsilon$ (resp., $X_\varepsilon$) plays $\frac{n_\varepsilon}{3}$ away-trips with teams in $X_\varepsilon$ (resp., $Y_\varepsilon$). The total traveling distance of teams in $Y_\varepsilon$ for these away-trips is
\begin{align*}
&\sum_{y\in Y_\varepsilon}\sum_{{P=xx'x'' \in\P_{X_\varepsilon}}}(w(y,x)+w(y,x'')+w(P))=\sum_{xx'x''\in\P_{X_\varepsilon}}(\delta_{Y_\varepsilon}(x)+\delta_{Y_\varepsilon}(x''))+n_\varepsilon w(\P_{X_\varepsilon}).   
\end{align*}
Denote $\sum_{xx'x''\in\P_{X_\varepsilon}}(\delta_{Y_\varepsilon}(x)+\delta_{Y_\varepsilon}(x''))$ by $\delta_{Y_\varepsilon}(\P_{X_\varepsilon})$ for the sake of presentation, and define $\delta_{X_\varepsilon}(\P_{Y_\varepsilon})$ in the similar manner. 
The total traveling distance of teams in $X_\varepsilon\cup Y_\varepsilon$ for these away-trips is $\delta_{Y_\varepsilon}(\P_{X_\varepsilon})+n_\varepsilon w(\P_{X_\varepsilon})+\delta_{X_\varepsilon}(\P_{Y_\varepsilon})+n_\varepsilon w(\P_{Y_\varepsilon})$.

Our 3-path construction will generate a schedule such that the total traveling distance of teams in $X\cup Y$ is close to 
$\delta_{Y_\varepsilon}(\P_{X_\varepsilon})+n_\varepsilon w(\P_{X_\varepsilon})+\delta_{X_\varepsilon}(\P_{Y_\varepsilon})+n_\varepsilon w(\P_{Y_\varepsilon})$, i.e, the performance of our schedule is close to the ideal schedule. To achieve this, we make sure that almost every team in $Y_\varepsilon$ (resp., $X_\varepsilon$) plays 3-consecutive away games along every 3-path in $\P_{X_\varepsilon}$ (resp., $\P_{Y_\varepsilon}$).
Next, we give our 3-path construction in details.

Let $\varepsilon>0$ be a fixed constant. 
Then, let $d\coloneqq 6\lceil 1/\varepsilon \rceil$, and $m\coloneqq2\floor{\frac{n}{6d}}-1$ (it is useful to ensure that $d$ is even and $m$ is odd). We assume that $n\geq 108d^3$, as otherwise, we have $n< 108d^3=O_\varepsilon(1)$\footnote{$O_\varepsilon(1)$ means a constant related to $\varepsilon$.}, and in this case, we can solve the problem optimally in constant time.
Let $n_\varepsilon\coloneqq 3md$ and $l\coloneqq n-n_\varepsilon$. Since $n\geq 108d^3$, we can get $m\geq 18d^2$.
Moreover, since $l=n-n_\varepsilon=n-3md=n-6d\floor{\frac{n}{6d}}+3d$, we can get 
\begin{equation}\label{eq0}
3d\leq l\leq 9d\leq 18d^2\leq m.
\end{equation}

We will select $n_\varepsilon$ vertices from $X$ (resp., $Y$) to form $X_\varepsilon$ (resp., $Y_\varepsilon$). The details of how to select these vertices is deferred to the analysis part of our schedule.
Recall that $\P_{X_\varepsilon}$ and $\P_{Y_\varepsilon}$ are two given 3-path packings in $G[X_\varepsilon]$ and $G[Y_\varepsilon]$. Since $n_\varepsilon=3md$, both $\P_{X_\varepsilon}$ and $\P_{Y_\varepsilon}$ contain $md$ 3-paths. 

Assume that $\P_{X_\varepsilon}=\{P_1,\dots,P_{md}\}$ and $\P_{Y_\varepsilon}=\{Q_1,\dots,Q_{md}\}$, where $P_i=x_{3i-3}x_{3i-2}x_{3i-1}$ and $Q_i=y_{3i-3}y_{3i-2}y_{3i-1}$. Each 3-path corresponds to three teams. For every $d$ 3-paths from $P_1$ (resp., $Q_1$) to $P_{md}$ (resp., $Q_{md}$), we pack the corresponding $3d$ teams as a \emph{super-team}, denoted by $S_i$ (resp., $T_i$). So, $S_i=\{x_{3i'-3},x_{3i'-2},x_{3i'-1}\}_{i'=id-d+1}^{id}$. Recall that $l= n-n_\varepsilon$.
There are $l$ teams in $\overline{X_\varepsilon}$ (resp., $\overline{Y_\varepsilon}$), and we label them using $\{x_{n_\varepsilon},\dots, x_{n-1}\}$ (resp., $\{y_{n_\varepsilon},\dots, y_{n-1}\}$). 
Thus, we can get $l$ \emph{team-pairs} $\{L_1,\dots,L_l\}$ where $L_i=\{x_{n_\varepsilon+i-1},y_{n_\varepsilon+i-1}\}$. 


The main idea of the 3-path construction is that: we first arrange a schedule of \emph{super-games} between super-teams (including the team-pairs); then, we extend the super-games into regular games between regular teams, which will form a feasible schedule for BTTP.

Our schedule contains $m+1$ time slots, where we have $m$ super-games in each of the first $m$ time slots, including $l$ \emph{left super-games} and $m-l$ \emph{normal super-games}. Each super-team will attend one super-game in each of the first $m$ time slots. The last time slot is special, and we will explain it later. 
Each of the first $m$ time slots spans $6d$ days and the last time slot spans $2l$ days. Hence, our schedule will span $6md+2l=2n_\varepsilon+2l=2n$ days.

\subsection{The First $m$ Time Slots}
In the first time slot, the super-games are arranged as shown in Figure~\ref{fig01}, where $m=5$ and $l=2$. 
Each super-team is denoted by a cycle node and each team-pair is denoted by a square node. 
There are $2m$ super-teams, $l$ team-pairs, and $m$ super-games.
Each super-game is denoted by an edge between two super-teams. 
The most left $l$ super-games, each involving two super-teams and one team-pair, are called the \emph{left super-games}. The left super-game involving team-pair $L_i$ is denoted as the \emph{$i$-th left super-game}.
The other $m-l$ super-games are called \emph{normal super-games}.
It is worth noting that there are only a constant number of left super-games since $l\leq 9d=O_\varepsilon(1)$. So, in our 3-path construction almost all super-games are normal super-games.

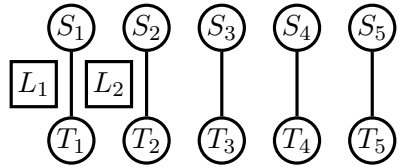
\begin{figure}[ht]
\centering
\begin{tikzpicture}
\tikzstyle{leftsuperteam}=[circle, draw=black!100, very thick, minimum size=6mm, inner sep=0pt]
\tikzstyle{normalsuperteam}=[circle, draw=black!100, very thick, minimum size=6mm, inner sep=0pt]
\tikzstyle{rightsuperteam}=[draw=black!100, very thick, minimum size=6mm, inner sep=0pt]

\node[normalsuperteam]      (up)    at (1, 0.75) {$S_1$};
\node[normalsuperteam]      (down)    at (1, -0.75) {$T_1$};
\node[rightsuperteam]      (middle)    at (0.5, 0)  {$L_1$};
\draw[very thick,-] (up.south) to (down.north);

\node[normalsuperteam]      (up)    at (2, 0.75) {$S_2$};
\node[normalsuperteam]      (down)    at (2, -0.75) {$T_2$};
\node[rightsuperteam]      (middle)    at (1.5, 0)  {$L_2$};
\draw[very thick,-] (up.south) to (down.north);

\node[normalsuperteam]      (up)    at (3, 0.75) {$S_3$};
\node[normalsuperteam]      (down)    at (3, -0.75) {$T_3$};
\draw[very thick,-] (up.south) to (down.north);

\node[normalsuperteam]      (up)    at (4, 0.75) {$S_4$};
\node[normalsuperteam]      (down)    at (4, -0.75) {$T_4$};
\draw[very thick,-] (up.south) to (down.north);

\node[normalsuperteam]      (up)    at (5, 0.75) {$S_5$};
\node[normalsuperteam]      (down)    at (5, -0.75) {$T_5$};
\draw[very thick,-] (up.south) to (down.north);
\end{tikzpicture}
\caption{The super-game schedule in the first time slot, where $m=5$ and $l=2$.}
\label{fig01}
\end{figure}

In the second time slot, the super-games are arranged as shown in Figure~\ref{fig02}.
Compared to the first time slot, the positions take one leftward shift for super-teams $S_1,\dots, S_m$, take two leftward shifts for super-teams $T_1,\dots, T_m$, and keep unchanged for team-pairs $L_1,\dots, L_l$.

\begin{figure}[ht]
\centering
\begin{tikzpicture}
\tikzstyle{leftsuperteam}=[circle, draw=black!100, very thick, minimum size=6mm, inner sep=0pt]
\tikzstyle{normalsuperteam}=[circle, draw=black!100, very thick, minimum size=6mm, inner sep=0pt]
\tikzstyle{rightsuperteam}=[draw=black!100, very thick, minimum size=6mm, inner sep=0pt]

\node[normalsuperteam]      (up)    at (1, 0.75) {$S_2$};
\node[normalsuperteam]      (down)    at (1, -0.75) {$T_3$};
\node[rightsuperteam]      (middle)    at (0.5, 0)  {$L_1$};
\draw[very thick,-] (up.south) to (down.north);

\node[normalsuperteam]      (up)    at (2, 0.75) {$S_3$};
\node[normalsuperteam]      (down)    at (2, -0.75) {$T_4$};
\node[rightsuperteam]      (middle)    at (1.5, 0)  {$L_2$};
\draw[very thick,-] (up.south) to (down.north);

\node[normalsuperteam]      (up)    at (3, 0.75) {$S_4$};
\node[normalsuperteam]      (down)    at (3, -0.75) {$T_5$};
\draw[very thick,-] (up.south) to (down.north);

\node[normalsuperteam]      (up)    at (4, 0.75) {$S_5$};
\node[normalsuperteam]      (down)    at (4, -0.75) {$T_1$};
\draw[very thick,-] (up.south) to (down.north);

\node[normalsuperteam]      (up)    at (5, 0.75) {$S_1$};
\node[normalsuperteam]      (down)    at (5, -0.75) {$T_2$};
\draw[very thick,-] (up.south) to (down.north);
\end{tikzpicture}
\caption{The super-game schedule in the second slot, where $m=5$ and $l=2$.}
\label{fig02}
\end{figure}
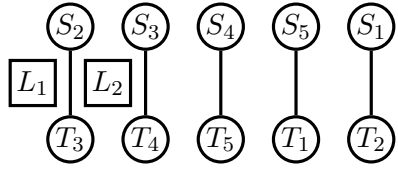

The schedules for the first $m$ time slots are derived analogously. We have two observations.

\begin{claim}\label{onesuper}
During the first $m$ time slots, there is only one super-game between super-teams $S_i$ and $T_j$ for every $1 \leq i, j \leq m$.
\end{claim}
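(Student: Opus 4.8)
The plan is to translate the shift rule into an explicit closed-form index for the two super-teams meeting at each position in each time slot, and then reduce the claim to solving a pair of linear congruences modulo $m$.

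First I would set up coordinates. Label the $m$ positions (columns) within a time slot by $k\in\{1,\dots,m\}$, with the top node an $S$-super-team and the bottom node a $T$-super-team. In the first time slot, position $k$ hosts the super-game between $S_k$ and $T_k$. Passing from one slot to the next, the construction shifts $S_1,\dots,S_m$ one step left and $T_1,\dots,T_m$ two steps left. Reading these shifts off Figure~\ref{fig02} and iterating, in time slot $t\in\{1,\dots,m\}$ the super-game at position $k$ is between $S_\sigma$ and $T_\tau$, where $\sigma\equiv k+(t-1)$ and $\tau\equiv k+2(t-1)\pmod m$ (indices taken as representatives in $\{1,\dots,m\}$). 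I would verify this formula against slot $t=2$ as a sanity check, and note that for each fixed $t$ both $k\mapsto\sigma$ and $k\mapsto\tau$ are bijections of $\mathbb{Z}/m\mathbb{Z}$, so every slot is a perfect matching between the $S$- and $T$-super-teams, consistent with the requirement that each super-team plays exactly once per slot.

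Next I would fix a target pair $(i,j)$ with $1\leq i,j\leq m$ and count the super-games between $S_i$ and $T_j$. Such a super-game occurs in slot $t$ at position $k$ exactly when $k+(t-1)\equiv i$ and $k+2(t-1)\equiv j\pmod m$. Subtracting the first congruence from the second eliminates $k$ and yields $(t-1)\equiv j-i\pmod m$, which has a unique solution $t\in\{1,\dots,m\}$. Substituting back, $k\equiv i-(t-1)\pmod m$ then pins down a unique $k\in\{1,\dots,m\}$. Hence there is exactly one slot-position pair realizing the super-game between $S_i$ and $T_j$, proving the claim.

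I expect no genuine obstacle here; the only care needed is bookkeeping with the modular indexing and wraparound so that the closed form for $(\sigma,\tau)$ is stated correctly. As a consistency check I would observe that matching the $m^2$ slot-position pairs against the $m^2$ target pairs $(i,j)$ exhibits a bijection, which both confirms the count and explains why the difference of shift rates being $2-1=1$, and hence coprime to $m$, is precisely what forces every pair to meet exactly once.
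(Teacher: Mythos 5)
Your proof is correct and takes essentially the same route as the paper: the paper's one-line proof simply asserts that $S_i$ meets $T_j$ exactly in time slot $(j-i+1+m)\bmod m$, which is precisely the unique solution $t-1\equiv j-i\pmod m$ you derive from the pair of congruences $k+(t-1)\equiv i$, $k+2(t-1)\equiv j\pmod m$. Your write-up just makes explicit the modular bookkeeping (and the observation that the shift-rate difference $2-1=1$ is coprime to $m$, so oddness of $m$ is not needed here) that the paper leaves implicit.
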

\begin{proof}
By the 3-path construction, there is a super-game between super-teams $S_i$ and $T_j$ if and only if the schedule is at the time slot of $(j-i+1+m)\bmod m$.
\end{proof}

\begin{claim}\label{unique}
During the first $m$ time slots, each super-team participates in only one $i$-th left super-game for every $1 \leq i\leq l$.
\end{claim}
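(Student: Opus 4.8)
The plan is to follow the position of the $i$-th left super-game across the first $m$ time slots and show, by a bijection argument, that every $S$-super-team and every $T$-super-team occupies that position exactly once. Since each team-pair $L_i$ keeps its position throughout, the $i$-th left super-game always takes place at the $i$-th position (counting positions $1,\dots,m$ from the left), so it suffices to determine which super-teams sit at position $i$ in each time slot.

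First I would record the position formula implied by the shifting rule. In the first time slot, position $q$ is occupied by $S_q$ and $T_q$. Since $S_1,\dots,S_m$ shift one place left and $T_1,\dots,T_m$ shift two places left per time slot, in time slot $t$ the $i$-th position is occupied by $S_p$ with $p\equiv i+(t-1)\pmod m$ and by $T_{p'}$ with $p'\equiv i+2(t-1)\pmod m$, where indices are taken in $\{1,\dots,m\}$. This is the same bookkeeping already used in the proof of Claim~\ref{onesuper}, and it can be checked directly against Figures~\ref{fig01} and~\ref{fig02}.

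For the $S$-super-teams, fix $i$ and note that $S_j$ appears in the $i$-th left super-game precisely when $t\equiv j-i+1\pmod m$, which has a unique solution $t\in\{1,\dots,m\}$; hence each $S_j$ participates in exactly one $i$-th left super-game. For the $T$-super-teams, $T_j$ appears when $2(t-1)\equiv j-i\pmod m$. The key point, and the only place where anything beyond routine counting is needed, is that $m$ is odd, so $\gcd(2,m)=1$ and $2$ is invertible modulo $m$; therefore this congruence again has a unique solution $t\in\{1,\dots,m\}$, and each $T_j$ participates in exactly one $i$-th left super-game.

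Combining the two cases, for every fixed $i\in\{1,\dots,l\}$ each of the $2m$ super-teams participates in the $i$-th left super-game exactly once over the $m$ time slots, which is in fact slightly stronger than the stated claim. The main (indeed only) subtlety is the use of the oddness of $m$ to guarantee that the double shift of the $T$-super-teams is still a cyclic permutation of all $m$ positions; the $S$-side follows immediately from the single shift, and this is precisely the reason $m$ was chosen to be odd in the construction.
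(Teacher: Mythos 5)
Your proof is correct and follows essentially the same route as the paper's: both track which super-team occupies the fixed position of team-pair $L_i$ in each time slot, observe the single shift makes the $S$-side trivially a cyclic permutation, and use the oddness of $m$ (equivalently, invertibility of $2$ modulo $m$) to conclude the double shift of the $T$-side also visits each super-team exactly once. Your version merely makes the paper's enumerated sequences $S_i,S_{i+1},\dots$ and $T_i,T_{i+2},\dots$ explicit as congruences $t\equiv j-i+1\pmod m$ and $2(t-1)\equiv j-i\pmod m$, which is the same argument in slightly more formal dress.
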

\begin{proof}
From the 1-st to the $m$-th time slot, the super-team in $\{S_1,\dots, S_m\}$ playing an $i$-th left super-game is $S_i,S_{i+1},\dots,S_m,S_1,\dots,S_{i-1}$, respectively; the super-team in $\{T_1,\dots, T_m\}$ playing an $i$-th left super-game is $T_i,T_{i+2},\dots,T_{i-1},T_{i+1},\dots,T_{i-2}$, respectively, since $m$ is odd. So, each super-team plays only one $i$-th left super-game.
\end{proof}

Next, we show how to extend normal super-games and left super-games into regular games. 

We consider a super-game between super-teams $S_i$ and $T_j$. For the sake of presentation, we define $s_{k}\coloneqq x_{(i-1)d+k}$ and $t_{k}\coloneqq y_{(j-1)d+k}$ for any $0\leq k<3d$.
By definitions of $S_i$ and $T_j$, we have that $S_i=\{s_0,s_1,s_2,\dots, s_{3d-3},s_{3d-2},s_{3d-1}\}$ and $T_j=\{t_0,t_1,t_2,\dots,t_{3d-3},t_{3d-2},t_{3d-1}\}$.

First consider that the super-game is a normal super-game.

\textbf{Normal super-games:}
The normal super-game will be extended into regular games on $6d$ days in the following way:
\begin{itemize}
    \item Team $s_{3i+i'}$ plays an away game with $t_{3j+j'}$ on day $(6(i+j)+i'+j')\bmod 6d$,
    \item Team $s_{3i+i'}$ plays a home game with $t_{3j+j'}$ on day $(6(i+j)+i'+j'+3)\bmod 6d$,
\end{itemize}
where $0\leq i,j\leq d-1$ and $0\leq i',j'\leq 2$. An illustration of the regular games after extending one normal super-game for $d=2$ is shown in Table~\ref{normal-example}.

\begin{table}[ht]
\centering
\begin{tabular}{c|*{12}{c}}
  & $0$ & $1$ & $2$ & $3$ & $4$ & $5$ & $6$ & $7$ & $8$ & $9$ & $10$ & $11$\\
\hline
  $s_0$ & $t_0$ & $t_1$ & $t_2$ & $\pmb{t_0}$ & $\pmb{t_1}$ & $\pmb{t_2}$ & $t_3$ & $t_4$ & $t_5$ & $\pmb{t_3}$ & $\pmb{t_4}$ & $\pmb{t_5}$\\
  $s_1$ & $\pmb{t_5}$ & $t_0$ & $t_1$ & $t_2$ & $\pmb{t_0}$ & $\pmb{t_1}$ & $\pmb{t_2}$ & $t_3$ & $t_4$ & $t_5$ & $\pmb{t_3}$ & $\pmb{t_4}$\\
  $s_2$ & $\pmb{t_4}$ & $\pmb{t_5}$ & $t_0$ & $t_1$ & $t_2$ & $\pmb{t_0}$ & $\pmb{t_1}$ & $\pmb{t_2}$ & $t_3$ & $t_4$ & $t_5$ & $\pmb{t_3}$\\
  $s_3$ & $t_3$ & $t_4$ & $t_5$ & $\pmb{t_3}$ & $\pmb{t_4}$ & $\pmb{t_5}$ & $t_0$ & $t_1$ & $t_2$ & $\pmb{t_0}$ & $\pmb{t_1}$ & $\pmb{t_2}$\\
  $s_4$ & $\pmb{t_2}$ & $t_3$ & $t_4$ & $t_5$ & $\pmb{t_3}$ & $\pmb{t_4}$ & $\pmb{t_5}$ & $t_0$ & $t_1$ & $t_2$ & $\pmb{t_0}$ & $\pmb{t_1}$\\
  $s_5$ & $\pmb{t_1}$ & $\pmb{t_2}$ & $t_3$ & $t_4$ & $t_5$ & $\pmb{t_3}$ & $\pmb{t_4}$ & $\pmb{t_5}$ & $t_0$ & $t_1$ & $t_2$ & $\pmb{t_0}$\\
\hline
  $t_0$ & $\pmb{s_0}$ & $\pmb{s_1}$ & $\pmb{s_2}$ & $s_0$ & $s_1$ & $s_2$ & $\pmb{s_3}$ & $\pmb{s_4}$ & $\pmb{s_5}$ & $s_3$ & $s_4$ & $s_5$\\
  $t_1$ & $s_5$ & $\pmb{s_0}$ & $\pmb{s_1}$ & $\pmb{s_2}$ & $s_0$ & $s_1$ & $s_2$ & $\pmb{s_3}$ & $\pmb{s_4}$ & $\pmb{s_5}$ & $s_3$ & $s_4$\\
  $t_2$ & $s_4$ & $s_5$ & $\pmb{s_0}$ & $\pmb{s_1}$ & $\pmb{s_2}$ & $s_0$ & $s_1$ & $s_2$ & $\pmb{s_3}$ & $\pmb{s_4}$ & $\pmb{s_5}$ & $s_3$\\
  $t_3$ & $\pmb{s_3}$ & $\pmb{s_4}$ & $\pmb{s_5}$ & $s_3$ & $s_4$ & $s_5$ & $\pmb{s_0}$ & $\pmb{s_1}$ & $\pmb{s_2}$ & $s_0$ & $s_1$ & $s_2$\\
  $t_4$ & $s_2$ & $\pmb{s_3}$ & $\pmb{s_4}$ & $\pmb{s_5}$ & $s_3$ & $s_4$ & $s_5$ & $\pmb{s_0}$ & $\pmb{s_1}$ & $\pmb{s_2}$ & $s_0$ & $s_1$\\
  $t_5$ & $s_1$ & $s_2$ & $\pmb{s_3}$ & $\pmb{s_4}$ & $\pmb{s_5}$ & $s_3$ & $s_4$ & $s_5$ & $\pmb{s_0}$ & $\pmb{s_1}$ & $\pmb{s_2}$ & $s_0$\\
\end{tabular}
\caption{
Extending the normal super-game between $\{s_0,s_1,s_2,s_3,s_4,s_5\}$ and $\{t_0,t_1,t_2$, $t_3,t_4,t_5\}$ into regular games on $6d=12$ days, where $d=2$ and home games are marked in bold.
}
\label{normal-example}
\end{table}

The design of normal super-games is essentially the same as the construction in~\cite{hoshino2013approximation}, which only works for the case that $n\equiv0\pmod 3$. It guarantees that all games between one team in $S_i$ and one team in $T_j$ are arranged. To get a feasible schedule for
the cases that $n\equiv1\pmod 3$ and $n\equiv2\pmod 3$, we need to design left super-games, which are newly introduced. 

\textbf{Left super-games:}
In this case, we have a team-pair $L_{i'}$, so we also define $s_{3d}\coloneqq x_{n_\varepsilon+i'-1}$ and $t_{3d}\coloneqq y_{n_\varepsilon+i'-1}$.
By definition of $L_{i'}$, we can get $L_{i'}=\{s_{3d},t_{3d}\}$. 
For all teams in $S_i\cup T_j\cup L_{i'}$, we define a set of games as 
\[
m_i\coloneqq\{s_{i'}\rightarrow t_{(i'+i)\bmod (3d+1)}\}_{i'=0}^{3d},
\] 
where each team in $S_i\cup T_j\cup L_{i'}$ plays one game. Define $\overline{m_i}$ as the set of games in $m_i$ but with reversed venues, and let $\overline{\overline{m_i}}\coloneqq m_i$.
Since $d$ is even, the extended regular games on $6d$ days can be presented by
\[
m_1\overline{m_2}m_3\overline{m_4}\cdots m_{3d-1}\overline{m_{3d}} \cdot\overline{\overline{m_2}m_3\overline{m_4}m_5\cdots \overline{m_{3d}}m_1}.
\]
An illustration of the regular games after extending one left super-game for $d=2$ is shown in Table~\ref{left-example}.

Note that if $d$ is odd, the games can be presented by
\[
m_1\overline{m_2}m_3\overline{m_4}\cdots m_{3d} \cdot\overline{m_1\overline{m_2}m_3\overline{m_4}\cdots m_{3d}}.
\]

\begin{table}[ht]
\centering
\begin{tabular}{c|*{12}{c}}
  & $0$ & $1$ & $2$ & $3$ & $4$ & $5$ & $6$ & $7$ & $8$ & $9$ & $10$ & $11$\\
\hline
  $s_0$ & $t_1$ & $\pmb{t_2}$ & $t_3$ & $\pmb{t_4}$ & $t_5$ & $\pmb{t_6}$ & $t_2$ & $\pmb{t_3}$ & $t_4$ & $\pmb{t_5}$ & $t_6$ & $\pmb{t_1}$\\
  $s_1$ & $t_2$ & $\pmb{t_3}$ & $t_4$ & $\pmb{t_5}$ & $t_6$ & $\pmb{t_0}$ & $t_3$ & $\pmb{t_4}$ & $t_5$ & $\pmb{t_6}$ & $t_0$ & $\pmb{t_2}$\\
  $s_2$ & $t_3$ & $\pmb{t_4}$ & $t_5$ & $\pmb{t_6}$ & $t_0$ & $\pmb{t_1}$ & $t_4$ & $\pmb{t_5}$ & $t_6$ & $\pmb{t_0}$ & $t_1$ & $\pmb{t_3}$\\
  $s_3$ & $t_4$ & $\pmb{t_5}$ & $t_6$ & $\pmb{t_0}$ & $t_1$ & $\pmb{t_2}$ & $t_5$ & $\pmb{t_6}$ & $t_0$ & $\pmb{t_1}$ & $t_2$ & $\pmb{t_4}$\\
  $s_4$ & $t_5$ & $\pmb{t_6}$ & $t_0$ & $\pmb{t_1}$ & $t_2$ & $\pmb{t_3}$ & $t_6$ & $\pmb{t_0}$ & $t_1$ & $\pmb{t_2}$ & $t_3$ & $\pmb{t_5}$\\
  $s_5$ & $t_6$ & $\pmb{t_0}$ & $t_1$ & $\pmb{t_2}$ & $t_3$ & $\pmb{t_4}$ & $t_0$ & $\pmb{t_1}$ & $t_2$ & $\pmb{t_3}$ & $t_4$ & $\pmb{t_6}$\\
  $s_6$ & $t_0$ & $\pmb{t_1}$ & $t_2$ & $\pmb{t_3}$ & $t_4$ & $\pmb{t_5}$ & $t_1$ & $\pmb{t_2}$ & $t_3$ & $\pmb{t_4}$ & $t_5$ & $\pmb{t_0}$\\
\hline
  $t_0$ & $\pmb{s_6}$ & $s_5$ & $\pmb{s_4}$ & $s_3$ & $\pmb{s_2}$ & $s_1$ & $\pmb{s_5}$ & $s_4$ & $\pmb{s_3}$ & $s_2$ & $\pmb{s_1}$ & $s_6$\\
  $t_1$ & $\pmb{s_0}$ & $s_6$ & $\pmb{s_5}$ & $s_4$ & $\pmb{s_3}$ & $s_2$ & $\pmb{s_6}$ & $s_5$ & $\pmb{s_4}$ & $s_3$ & $\pmb{s_2}$ & $s_0$\\
  $t_2$ & $\pmb{s_1}$ & $s_0$ & $\pmb{s_6}$ & $s_5$ & $\pmb{s_4}$ & $s_3$ & $\pmb{s_0}$ & $s_6$ & $\pmb{s_5}$ & $s_4$ & $\pmb{s_3}$ & $s_1$\\
  $t_3$ & $\pmb{s_2}$ & $s_1$ & $\pmb{s_0}$ & $s_6$ & $\pmb{s_5}$ & $s_4$ & $\pmb{s_1}$ & $s_0$ & $\pmb{s_6}$ & $s_5$ & $\pmb{s_4}$ & $s_2$\\
  $t_4$ & $\pmb{s_3}$ & $s_2$ & $\pmb{s_1}$ & $s_0$ & $\pmb{s_6}$ & $s_5$ & $\pmb{s_2}$ & $s_1$ & $\pmb{s_0}$ & $s_6$ & $\pmb{s_5}$ & $s_3$\\
  $t_5$ & $\pmb{s_4}$ & $s_3$ & $\pmb{s_2}$ & $s_1$ & $\pmb{s_0}$ & $s_6$ & $\pmb{s_3}$ & $s_2$ & $\pmb{s_1}$ & $s_0$ & $\pmb{s_6}$ & $s_4$\\
  $t_6$ & $\pmb{s_5}$ & $s_4$ & $\pmb{s_3}$ & $s_2$ & $\pmb{s_1}$ & $s_0$ & $\pmb{s_4}$ & $s_3$ & $\pmb{s_2}$ & $s_1$ & $\pmb{s_0}$ & $s_5$\\
\end{tabular}
\caption
{
Extending the left super-game between $\{s_0,s_1,s_2,s_3,s_4,s_5\}$ and $\{t_0,t_1,t_2$, $t_3,t_4,t_5\}$ (including a team-pair $\{s_6,t_6\}$) into regular games on $6d=12$ days, where $d=2$ and home games are marked in bold.
}
\label{left-example}
\end{table}

For each left super-game, we have two more teams due to a team-pair, and then two days' games in $m_0$ and $\overline{m_0}$ are still unarranged in extending the super-game on $6d$ days.

\subsection{The Last Time Slot}
In the last time slot, we will arrange all unarranged games due to left super-games. First, we will consider the unarranged games involving teams in team-pairs. Then, we will consider the unarranged games involving teams in super-teams.

\textbf{The teams in team-pairs.} Recall that the $l$ team-pairs contain $l$ teams in $\overline{X_\varepsilon}=\{x_{n_\varepsilon},\dots,x_{n-1}\}$ and $l$ teams in $\overline{Y_\varepsilon}=\{y_{n_\varepsilon},\dots,y_{n-1}\}$. The next claim shows the unarranged regular games involving teams in team-pairs. 
\begin{claim}\label{unarranged-1}
The unarranged regular games involving teams in team-pairs are $\{x_i\leftrightarrow y_j\mid x_i\in \overline{X_\varepsilon}, y_j\in\overline{Y_\varepsilon}\}$.
\end{claim}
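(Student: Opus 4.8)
The plan is to track, for each team belonging to a team-pair, exactly which of its $2n$ required games get scheduled during the first $m$ time slots, and to show that the only games left over are those against the opposite team-pair league. Since a team-pair team appears only inside left super-games, the whole argument reduces to understanding the left super-game extension together with Claim~\ref{unique}.

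First I would analyze a single left super-game between $S_i$, $T_j$ and $L_{i'}$ in isolation. Here the $3d+1$ teams $s_0,\dots,s_{3d}$ play the $3d+1$ teams $t_0,\dots,t_{3d}$, and in round $m_k$ the team $s_{i'}$ meets $t_{(i'+k)\bmod(3d+1)}$, so an unordered pair $(s_a,t_b)$ is exactly the pair realized by the shift $k\equiv b-a\pmod{3d+1}$. A short check of the explicit sequence $m_1\overline{m_2}\cdots\overline{m_{3d}}\cdot\overline{\overline{m_2}m_3\cdots m_1}$ shows that every shift $k\in\{1,\dots,3d\}$ occurs exactly once unbarred and once barred (odd shifts away in the first half and home in the second, even shifts the reverse), while the shift $k=0$ appears in neither $m_0$ nor $\overline{m_0}$. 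Consequently, inside this super-game a pair $(s_a,t_b)$ receives both its home and away game iff $a\neq b$, and the shift-$0$ pairs $s_a\leftrightarrow t_a$ are precisely the two days left unarranged. In particular the team-pair $X$-team $s_{3d}=x_{n_\varepsilon+i'-1}$ is fully scheduled against $t_0,\dots,t_{3d-1}$, i.e.\ against all of $T_j$, and its only missing opponent is $t_{3d}=y_{n_\varepsilon+i'-1}$, its own partner in $L_{i'}$.

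Next I would aggregate over the first $m$ time slots using Claim~\ref{unique}. Since $x_{n_\varepsilon+i'-1}$ participates only in $i'$-th left super-games, and each super-team $T_b$ plays the $i'$-th left super-game exactly once, the opponents of $x_{n_\varepsilon+i'-1}$ across these slots run through $T_1,\dots,T_m$ each once; hence $x_{n_\varepsilon+i'-1}$ is scheduled home-and-away against every team of $Y_\varepsilon=\bigcup_b T_b$. Therefore all of its games against $Y_\varepsilon$ are already arranged, and its unscheduled games are exactly those against $\overline{Y_\varepsilon}$: for a different pair $j'\neq i'$ the teams $x_{n_\varepsilon+i'-1}$ and $y_{n_\varepsilon+j'-1}$ never share a left super-game and so never meet, while for $j'=i'$ the pair $x_{n_\varepsilon+i'-1}\leftrightarrow y_{n_\varepsilon+i'-1}$ is precisely the shift-$0$ game omitted above. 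By the symmetric argument every $Y$ team-pair team has all its $X_\varepsilon$-games arranged and all its $\overline{X_\varepsilon}$-games missing. Collecting these, the unarranged games touching a team-pair are exactly $\{x_i\leftrightarrow y_j\mid x_i\in\overline{X_\varepsilon},\,y_j\in\overline{Y_\varepsilon}\}$, as claimed.

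The main obstacle will be the bookkeeping in the single-super-game step: one must verify carefully that the alternating barred/unbarred pattern in the displayed $6d$-day sequence really covers each nonzero shift in both venue directions exactly once and leaves only shift $0$ out, since an off-by-one in the parity pattern (recall $d$ is even) would change which games end up missing. Everything afterward is a clean combinatorial aggregation via Claim~\ref{unique} together with the $X$/$Y$ symmetry.
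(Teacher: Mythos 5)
Your proof is correct and follows essentially the same route as the paper's: both rest on the facts that team-pair positions are fixed (so teams from distinct team-pairs never meet) and that extending a left super-game leaves exactly the shift-$0$ games $m_0\cup\overline{m_0}$ unarranged, which contain the partner games $s_{3d}\leftrightarrow t_{3d}$. The only difference is that you also verify the reverse inclusion explicitly---checking via the barred/unbarred shift pattern and Claim~\ref{unique} that every game between a team-pair team and $Y_\varepsilon$ (resp.\ $X_\varepsilon$) is indeed arranged---a step the paper's proof of this claim leaves implicit and defers to the feasibility argument of Theorem~\ref{feasibility}.
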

\begin{proof}
Since in our 3-path construction the positions of team-pairs are fixed, we never arrange games between teams from two different team-pairs. So, the games between teams $x_i\in \overline{X_\varepsilon}$ and $y_j\in \overline{Y_\varepsilon}$ with $i\neq j$ are unarranged.
Moreover, in a left super-game two days' games in $m_0\cup\overline{m_0}=\{s_{i'}\leftrightarrow t_{i'}\}_{i'=0}^{3d}$ were not arranged, which include two games $\{s_{3d}\leftrightarrow t_{3d}\}$ between teams in the involved team-pair. And, all other games in $m_0\cup\overline{m_0}$ only involve teams in super-teams. So, the games between teams $x_i\in \overline{X_\varepsilon}$ and $y_j\in \overline{Y_\varepsilon}$ with $i=j$ are also unarranged. Therefore, the unarranged regular games (involving teams in team-pairs) are $\{x_i\leftrightarrow y_j\mid x_i\in \overline{X_\varepsilon}, y_j\in\overline{Y_\varepsilon}\}$.
\end{proof}

The games in $\{x_i\leftrightarrow y_j\mid x_i\in \overline{X_\varepsilon}, y_j\in\overline{Y_\varepsilon}\}$ forms a bipartite tournament between two $l$-team leagues $\overline{X_\varepsilon}$ and $\overline{Y_\varepsilon}$. Next, we design a simple algorithm to arrange them.

Let $s_i\coloneqq x_{n_\varepsilon+i}$ and $t_i\coloneqq y_{n_\varepsilon+i}$ for any $0\leq i<l$. We define a set of games as
\[
m_i\coloneqq\{s_{i'}\rightarrow t_{(i'+i)\bmod l}\}_{i'=0}^{l-1}.
\]
The games on $2l$ days for $\overline{X_\varepsilon}\cup\overline{Y_\varepsilon}$ can be presented by
\[
m_0\overline{m_1}m_2\overline{m_3}\cdots  \cdot\overline{m_0\overline{m_1}m_2\overline{m_3}\cdots}.
\]
Since $l\geq 3d$, we can avoid the infeasible case $m_0\overline{m_0}$ ($l=1$).

\textbf{The teams in super-teams.} 
By Claim~\ref{unique}, for any $1\leq i\leq l$, each super-team plays only one $i$-th left super-game. Hence, there are $m$ $i$-th left super-games, denoted as $\{S_i-T_{j_i}\}_{i=1}^{m}$. Recall that in the $i$-th left super-game two days' games in $m_0\cup\overline{m_0}$ were not arranged. We denote the union of unarranged games (involving teams in super-teams) in $m_i$ for all $m$ $i$-th left super-games as $M_i$. Then, the unarranged games (involving teams in super-teams) due to the left super-games are $\bigcup_{i=1}^{l}M_i\cup\overline{M_i}$. 
The games on $2l$ days for $X_\varepsilon\cup Y_\varepsilon$ can be presented by
\[
M_1\overline{M_2}M_3\overline{M_4}\cdots  \cdot\overline{M_1\overline{M_2}M_3\overline{M_4}\cdots}.
\]

\begin{theorem}\label{feasibility}
For BTTP with $n\geq 108d^3$, the above 3-path construction generates a feasible solution.
\end{theorem}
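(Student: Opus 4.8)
The plan is to verify, one at a time, the four conditions that make a BTTP schedule feasible: that the schedule occupies exactly $2n$ days with every team playing exactly one game per day; that it is a complete double round-robin bipartite tournament, i.e.\ for every $x\in X$ and $y\in Y$ each of the two games $x\to y$ and $y\to x$ is scheduled exactly once; that it obeys \emph{No-repeat}; and that it obeys \emph{Bounded-by-3}. The first condition is almost immediate: the day count $6md+2l=2n_\varepsilon+2l=2n$ has already been recorded, and in every building block (the normal super-game formula, the games $m_i$ of a left super-game, and the games $m_i$ and $M_i$ of the last time slot) each team is assigned exactly one opponent on each day, so one-game-per-day holds within every block and hence globally.

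For completeness I would partition all $2n^2$ games according to the four blocks $X_\varepsilon\times Y_\varepsilon$, $X_\varepsilon\times\overline{Y_\varepsilon}$, $\overline{X_\varepsilon}\times Y_\varepsilon$, and $\overline{X_\varepsilon}\times\overline{Y_\varepsilon}$, and account for each. By Claim~\ref{onesuper} any two super-teams $S_a,T_b$ meet in exactly one super-game; if it is normal then (as in~\cite{hoshino2013approximation}) all games between $S_a$ and $T_b$ are scheduled there, while if it is left then team $s_k$ plays every $t_{k'}$ with $k'\neq k$ (home and away) inside the super-game and only the diagonal game $s_k\leftrightarrow t_k$ is deferred. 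These deferred diagonal games, grouped by team-pair position via Claim~\ref{unique}, are exactly the matchings $M_i$, and the last-time-slot pattern $M_1\overline{M_2}M_3\overline{M_4}\cdots\cdot\overline{M_1\overline{M_2}M_3\overline{M_4}\cdots}$ schedules each of them once at home and once away. The mixed blocks $X_\varepsilon\times\overline{Y_\varepsilon}$ and $\overline{X_\varepsilon}\times Y_\varepsilon$ are handled entirely inside left super-games (a team $s_k$ with $k<3d$ meets the team-pair vertex $t_{3d}$ through the block $m_{(3d-k)\bmod(3d+1)}\neq m_0$, both home and away), and the block $\overline{X_\varepsilon}\times\overline{Y_\varepsilon}$ is precisely the deferred set of Claim~\ref{unarranged-1}, scheduled by the team-pair pattern $m_0\overline{m_1}m_2\overline{m_3}\cdots$ on the last $2l$ days. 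Checking that every game appears once and no game twice is then bookkeeping driven by Claims~\ref{onesuper}--\ref{unarranged-1}.

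For \emph{No-repeat} the within-block case follows from the defining formulas: consecutive days always change the opponent, because adjacent days use adjacent shift indices (the $j'$ for normal games, and the index of $m_i$ for left games and the last slot). Across a seam between two time slots, or between the last time slot and the block before it, the two opponents lie in different super-teams (consecutive super-games of a fixed super-team face different $T_b$ by Claim~\ref{onesuper}, and different left-super-game positions face different partners by Claim~\ref{unique}) or are a super-team vertex versus a team-pair vertex, so they are automatically distinct; the degenerate pattern $m_0\overline{m_0}$ is excluded because $l\ge 3d>1$ by~\eqref{eq0}. For \emph{Bounded-by-3} the key observation is that every team in $X$ always plays the $s$-side role and every team in $Y$ the $t$-side role, so its home/away word is, within a normal super-game, periodic with period $6$ of the form three aways followed by three homes cyclically shifted by $k\bmod 3$ (with homes and aways interchanged on the $t$-side), and, within a left super-game or the last time slot, simply alternating. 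Since $6d$ is a multiple of $6$, two consecutive normal super-games glue seamlessly, and any run of equal symbols can be created only near a $6d$-day seam; thus Bounded-by-3 reduces to an $O(1)$ check over the three residue classes and the handful of seam types (normal--normal, normal--left, left--normal, left--left, and block--last-slot).

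The main obstacle is exactly this last reduction: verifying that gluing a period-$6$ tail (from a normal super-game) to an alternating head (from a left super-game or the last time slot), in all residue classes and on both sides, never produces four consecutive home or four consecutive away games. I would resolve it by tabulating, for each residue $k\bmod 3\in\{0,1,2\}$ and each super-game type, the three symbols ending a block and the three symbols starting the next, and confirming that every concatenation has maximum run length exactly $3$; the tightest cases are the $k\equiv 0$ normal tail $\dots\mathrm{AHHH}$ and the $k\equiv 2$ normal tail $\dots\mathrm{AAAH}$, each of which abuts an alternating head beginning with $\mathrm{A}$ without extending its triple. That the parities line up --- $m$ odd (so that Claim~\ref{unique} yields a genuine permutation of partners) and $d$ even (so that the left-super-game word closes up as $m_1\overline{m_2}\cdots\overline{m_{3d}}\cdot\overline{\overline{m_2}\cdots m_1}$) --- is what guarantees that these are the only seams that arise and that all of them pass.
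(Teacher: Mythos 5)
Your proposal is correct and follows essentially the same route as the paper's proof: completeness of the double round-robin via the counting argument ($2n$ days, $2n^2$ games) together with Claims~\ref{onesuper}, \ref{unique} and \ref{unarranged-1}, no-repeat from the fact that both games of any pair lie in a single block whose design separates them, and bounded-by-3 via the home/away patterns of normal and left super-games and the last time slot. Your seam tabulation for bounded-by-3 is somewhat more explicit than the paper's statement that a team plays $t$-consecutive home games at the start and $(3-t)$-consecutive at the end of a normal super-game (which implicitly yields runs of exactly $3$ at normal--normal seams), but it is the same argument.
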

\begin{proof}
First, we need to prove that in our schedule all games $\{x\leftrightarrow y\mid x\in X, y\in Y\}$ are arranged once. Since our schedule spans $2n$ days that contain exactly $2n^2$ games, it is sufficient to show that all games $\{x\leftrightarrow y\mid x\in X, y\in Y\}$ are arranged.
For any two vertex sets $X'\subseteq X$ and $Y'\subseteq Y$, we use $(X', Y')$ to denote all games between one regular team in $X'$ and one regular team of $Y'$ for the sake of presentation.
In our 3-path construction there are $2m$ super-teams $\{S_1,\dots,S_m\}\cup \{T_1,\dots,T_m\}$ and $l$ team-pairs $\{L_1,\dots,L_l\}$.
\begin{enumerate}
    \item For two super-teams $S_i$ and $T_j$, the games $(S_i,T_j)$ need to be arranged. By Claim~\ref{onesuper}, there is one super-game between them. If it is a normal super-game, the games are arranged in extending the super-game. If it is a left super-game, the games are arranged except for two days' games in extending the super-game, where the unarranged two days' games belonging to $\bigcup_{i=1}^{l}M_i\cup\overline{M_i}$ are arranged in the last time slot.
    \item For one team-pair $L_{i'}=\{x_i,y_i\}$, the games $\{x_i\leftrightarrow y_i\}$ belonging to $(\overline{X_\varepsilon}, \overline{Y_\varepsilon})$ are arranged in the last time slot. 
    Moreover, for two team-pair $L_{i'}=\{x_i,y_i\}$ and $L_{j'}=\{x_j,y_j\}$, the games $\{x_i\leftrightarrow y_j\}\cup\{x_j\leftrightarrow y_i\}$ belonging to $(\overline{X_\varepsilon}, \overline{Y_\varepsilon})$ are also arranged in the last time slot.
    \item For one super-team $S_i$ (resp., $T_i$) and one team-pair $L_{j'}=\{x_j,y_j\}$, the games $(S_i, \{y_{j}\})$ (resp., $(T_i,\{x_{j}\})$) need to be arranged. By Claim~\ref{unique}, $S_i$ (resp., $T_i$) plays one $i'$-th left super-game, and the games are arranged in extending the left super-game.
\end{enumerate}

Then, we need to prove that the double round-robin bipartite tournament satisfies the no-repeat constraint. For any pair of teams $x\in X$ and $y\in Y$, the two games between them are arranged entirely in one super-game or the unarranged games involving teams in team-pairs or the unarranged games involving teams in super-teams.
It is easy to check that the design of each satisfies the constraint.

Last, we need to prove that the double round-robin bipartite tournament satisfies the bounded-by-3 constraint.
We may only consider a team $x_i\in X$ since the analysis is similar for a team $y_i\in Y$. 
\begin{enumerate}
    \item Assume $i\geq n_\varepsilon$. So, $x_i$ belongs to a team-pair, denoted as $L_{i'}$. In each of the first $m$ time slots, $L_{i'}$ always plays a left super-game, where $x_i$ plays an away game and a home game alternatively on $6d$ days (see Table~\ref{left-example}). In the last time slot, $x_i$ plays an away game and a home game on the first two days, and then it plays at most 2-consecutive home/away games on the rest $l-2$ days (note that $l\geq 3d>3$). Hence, $x_i$ plays at most 3-consecutive home/away in our schedule.
    \item Assume $i<n_\varepsilon$. So, $x_i$ belongs to a super-team, denoted as $S_{i'}$. Let $t\coloneqq i\bmod 3$. In each of the first $m$ time slots, $S_{i'}$ either plays a left super-game or a normal super-game. If it plays a left super-game, $x_i$ plays an away game and a home game alternatively on $6d$ days (see Table~\ref{left-example}); if it plays a normal super-game,  on $6d$ days, $x_i$ firstly plays $t$-consecutive home games and 3-consecutive away games, and then plays 3-consecutive home games and 3-consecutive away games alternatively, and lastly plays $(3-t)$-consecutive home games (see Table~\ref{normal-example}). In the last time slot, $x_i$ plays an away game and a home game on the first two days, and then it plays at most 2-consecutive home/away games on the rest $l-2$ days. Hence, $x_i$ plays at most 3-consecutive home/away in our schedule.
\end{enumerate}

Therefore, our 3-path construction outputs a feasible schedule. Note that it requires $n\geq 108d^3$ by the setting so that $l\leq m$.
\end{proof}

Although the above theorem requires $n$ being a large number, it can guarantee our expected approximation ratio since an instance with a constant $n$ can be optimally solved in constant time.
However, for the sake of application, we also show that our 3-path construction can be slightly modified to work for instances with almost all $n$.

\begin{theorem}\label{feasibility+}
For BTTP, our 3-path construction can be slightly modified to form a new 3-path construction, which can generate a feasible solution for any $n$ except for $n\in\{1,2,5,8,14\}$.
\end{theorem}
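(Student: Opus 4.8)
The plan is to keep the construction behind Theorem~\ref{feasibility} essentially intact but to decouple its parameters from $\varepsilon$: instead of forcing $d=6\lceil 1/\varepsilon\rceil$ and $n\ge 108d^3$, we treat $d\ge 1$, the number of super-teams $m$ (still odd), and the number of team-pairs $l$ as free integers subject only to $n=3md+l$. The first step is to revisit the three correctness arguments in the proof of Theorem~\ref{feasibility} (every game scheduled exactly once, no-repeat, bounded-by-3) and record which inequalities they genuinely use. The coverage argument together with Claims~\ref{onesuper}--\ref{unique} needs only that $m$ is odd and $0\le l\le m$, so that the $l$ left and $m-l$ normal super-games fit in each of the first $m$ time slots and the two-step shift of the $T$-super-teams visits every position; the normal and left extensions work for any $d\ge 1$ once we use the even-$d$ or odd-$d$ template as appropriate; and the last time slot is feasible for any $2\le l\le m$, since the pattern $m_0\overline{m_1}m_2\cdots$ makes every team alternate home/away with at most two consecutive games at the midpoint. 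The hypothesis $n\ge 108d^3$ entered only to force $3d\le l\le m$ for the $\varepsilon$-driven $d$; dropping it, the construction remains feasible whenever $n=3md+l$ with $m\ge 1$ odd, $d\ge 1$, $md\ge 1$, and $l\in\{0,2,3,\dots,m\}$.

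The main obstacle, and the only point where a genuine modification is needed, is the case $l=1$: a single team-pair $\{x_{n-1},y_{n-1}\}$ forces its two games into the block $m_0\overline{m_0}$ of the last time slot, i.e.\ onto two consecutive days, violating no-repeat. I would repair this locally while keeping the schedule at exactly $2n$ days: play $x_{n-1}\to y_{n-1}$ on one of the two last-slot days, and swap the reverse game $y_{n-1}\to x_{n-1}$ with a suitably chosen super-team game scheduled on a non-adjacent day, so that the two games no longer fall consecutively and every other game is untouched. One then checks that the swapped day is no-repeat-safe for the four affected teams and that it does not create four consecutive home or away games. This makes $l=1$ admissible whenever $md\ge 1$, which is exactly what lets the small cases with $n\equiv1\pmod 3$ (e.g.\ $n=4,7$, which admit no decomposition with $l\ne 1$) go through.

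With feasibility reduced to the existence of an admissible triple $(d,m,l)$, the remaining work is a short case analysis on $n\bmod 3$. For $n\equiv0$ take $l=0$ and any factorization $md=n/3$, valid for all $n\ge 3$. For $n\equiv1$ take $l=1$ and $p=(n-1)/3$, with $m$ the largest odd divisor of $p$ and $d=p/m$; since $l=1\le m$ always holds, this covers every $n\ge 4$ and leaves only $n=1$. For $n\equiv2$ I would first try $l=2$, which succeeds if and only if $p=(n-2)/3\ge 1$ has an odd divisor $\ge 3$, i.e.\ exactly when $p$ is not a power of two (take $m$ equal to such a divisor, $d=p/m$). When $p=2^k$ this fails, and I would instead take $l=5$, for which $p'=(n-5)/3=2^k-1$ is odd and serves as $m$ provided $2^k-1\ge 5$, i.e.\ $k\ge 3$; hence all $n=3\cdot 2^k+2\ge 26$ in this family are still covered.

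The final step is to verify that the residual values admit no admissible triple at all, pinning down the exception set. For $n\equiv2$ the only obstructions are $p\in\{0,1,2,4\}$, that is $n\in\{2,5,8,14\}$: for each one checks every $l\in\{2,5,8,\dots\}$ with $(n-l)/3\ge 1$ and confirms that the required odd divisor $\ge l$ never exists (for instance $n=14$ fails for $l=2$ because $p=4=2^2$, for $l=5$ because $p=3<5$, and for $l\ge 8$ because then $p\le 2$). Combined with $n=1$, where $n\equiv1$ but the only candidate $l=1$ gives $p=0$, this yields precisely $\{1,2,5,8,14\}$. I expect the two delicate points to be verifying that the $l=1$ repair preserves no-repeat and bounded-by-3 while keeping the day count at $2n$, and establishing the exhaustiveness of this last exclusion check.
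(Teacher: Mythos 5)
Your reduction of feasibility to the existence of a triple $(d,m,l)$ with $n=3md+l$, $m$ odd and $m\geq l$, together with your residue-class analysis pinning the exception set to $\{1,2,5,8,14\}$, is correct and in fact more constructive than the paper's treatment (the paper verifies all $n\geq 24$ via the single choice $d=1$, $m=2\lfloor n/6\rfloor-1$, and brute-forces $n<24$; your odd-divisor argument for $n\equiv 2\pmod 3$, with the fallback $l=5$ when $(n-2)/3$ is a power of two, reaches the same set explicitly). The genuine gap is your $l=1$ repair, and it is the crux of the theorem: every $n\equiv 1\pmod 3$ forces $l\equiv 1\pmod 3$, and cases such as $n=4,7,16,22$ admit no admissible triple with $l\neq 1$, so the whole residue class $n\equiv1\pmod 3$ stands or falls with this step. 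A BTTP schedule is compact: there are $2n$ days and each team plays $2n$ games, so every team plays exactly one game on every day. Hence you cannot ``swap the reverse game $y_{n-1}\to x_{n-1}$ with a suitably chosen super-team game scheduled on a non-adjacent day'' while leaving every other game untouched: if $y_{n-1}\to x_{n-1}$ moves to day $t'$, then $x_{n-1}$ and $y_{n-1}$ each have two games on day $t'$ (their original day-$t'$ opponents are untouched) and none on the vacated day, and symmetrically the displaced super-team game leaves its two teams doubly booked on one day and idle on another. No exchange of individual games between two days can preserve the one-game-per-team-per-day property, so the repair as stated fails.

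The paper repairs $l=1$ by permuting whole rounds instead of single games: writing the schedule as day-rounds $g_1g_2\dots g_{2n}$, where with $l=1$ the last two rounds are $g_{2n-1}=m_0\cup M_1$ and $g_{2n}=\overline{m_0}\cup\overline{M_1}$, it reschedules cyclically as $g_{2n}g_1g_2\dots g_{2n-1}$. This trivially keeps one game per team per day, separates each offending pair of rounds by $2n-1$ days so the no-repeat violation disappears, and the no-repeat and bounded-by-3 checks at the two new junctions go through as in the proof of Theorem~\ref{feasibility}. If you replace your local swap with this one-day rotation (or any other feasibility-preserving permutation of entire rounds), the remainder of your argument is sound and yields exactly the claimed exception set $\{1,2,5,8,14\}$.
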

\begin{proof}
Recall that $d$ is the number of 3-paths with respect to one super-team, $m$ is the number of super-teams in one league, and $l$ is the the number of team-pairs. We have $n_\varepsilon=3md$ and $n=n_\varepsilon+l$. Previously, we set $d\coloneqq6\lceil 1/\varepsilon \rceil$ and $m\coloneqq2\floor{\frac{n}{6d}}-1$, which is not necessary.

We present some necessary conditions of our previous 3-path construction.
Firstly, it requires that $m\geq l$ since there are $m$ super-games, and among them, there are $l$ left super-games.
Secondly, it requires that $m$ is odd; otherwise, Claim~\ref{unique} will not hold.
Lastly, it requires that $l\neq 1$; otherwise, in the time slot games for teams in the team-pair (resp., super-teams) will be arranged as $m_0\overline{m_0}$ (resp., $M_1\overline{M_1}$) on the last two days, violating the no-repeat constraint. 

Note that the problem that $l=1$ can be fixed as follows: we rearrange the games $m_0\cup\overline{m_0}\cup M_1\cup\overline{M_1}$ in the last time slot as follows: we arrange games in $\overline{m_0}\cup\overline{M_1}$ on the first day, and move the previous games on the $i$-th to the $(i+1)$-th day for each $1\leq i<2n$. 
That is, if we denote the previous games on $2n$ days in our previous 3-path construction as $g_1g_2\dots g_{2n}$, we rearrange them in the order of $g_{2n}g_1g_2\dots g_{2n-1}$, which can generate a feasible solution by a similar proof of Theorem~\ref{feasibility}.

Hence, we only need to make sure that $m$ is odd and $m\geq l$. Note that $n_\varepsilon=3md\leq n$ and $l=n-n_\varepsilon=n-3md\leq m$. Therefore, $m$ and $d$ should satisfy that
\begin{itemize}
    \item $n\geq 3md$;
    \item $m\geq n-3md$;
    \item $m$ is odd.
\end{itemize}
For almost any $n$, we can find feasible $m$ and $d$.
For example, if $n\equiv 0\pmod 3$, we simply set $d=n/3$ and $m=1$, and in this case our new 3-path construction becomes the same as in~\cite{hoshino2013approximation}.
Moreover, by setting $d=1$ and $m=2\floor{\frac{n}{6}}-1$, it is easy to see that for any $n\geq 24$ the constrains are also satisfied. However, by testing every $n$ satisfying $1\leq n<24$, we find that no such $m$ and $d$ exist only for $n\in\{1,2,5,8,14\}$.

Therefore, our 3-path construction can be slightly modified to form a new 3-path construction, which can generate a feasible solution for any $n$ except for $n\in\{1,2,5,8,14\}$.
\end{proof}

\begin{remark}
In Theorem~\ref{feasibility+}, the new 3-path construction relies on ``greedily chosen'' parameters $m$ and $d$. While this approach generates a feasible solution for almost all $n$, it may only be used to design a heuristic algorithm, as $d$ may be too small and then it cannot ensure that almost all teams play 3 consecutive away games along nearly all 3-paths in a given 3-path packing.
\end{remark}

\section{The Analysis}\label{SC.4}
In this section, we first introduce a new lower bound for any optimal solution to BTTP, which is related to the minimum weight cycle packing. Then, we analyze the quality of our 3-path construction, which reveals the relation between the quality of the obtained schedule and the used 3-path packing. At last, we show how to use the minimum weight cycle packing to compute a good 3-path packing such that by applying this 3-path packing to our 3-path construction, we can get a schedule with an approximation ratio of at most $(3/2+\varepsilon)$ in polynomial time.

\subsection{Lower Bounds}\label{SC.4.2}
Lower bounds play an important role in approximation algorithms. We need to compare our solution with an optimal solution. However, it is hard to compute an optimal solution. Then, we turn to find some lower bounds of the optimal value 
and compare our solution with the lower bounds. We will use two lower bounds. The first one is from the literature and the second one is newly proved.

\begin{lemma}[\cite{hoshino2013approximation}]\label{delta}\label{lb1}
For BTTP, it holds that $2\delta_X(Y)=\delta_Y(X)+\delta_X(Y)\leq \frac{3}{2}\cdot\mbox{OPT}$.
\end{lemma}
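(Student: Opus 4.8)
The plan is to reduce the stated inequality to a per-team estimate. First I would observe that both $\delta_Y(X)$ and $\delta_X(Y)$ equal the same quantity $W \coloneqq \sum_{x\in X,\,y\in Y} w(x,y)$, the total weight of all edges between the two leagues; this is exactly the remark $\delta_{Y'}(X')=\delta_{X'}(Y')$ applied with $X'=X$, $Y'=Y$. Hence the claim $2\delta_X(Y)\le \frac{3}{2}\OPT$ is equivalent to $\OPT\ge \frac{4}{3} W$. Since $\OPT$ is the sum of the itinerary weights $D(v)$ over all teams $v\in X\cup Y$, it suffices to prove a uniform lower bound $D(x)\ge \frac{2}{3}\,\delta_{Y}(x)$ for every $x\in X$ (and symmetrically $D(y)\ge\frac{2}{3}\,\delta_X(y)$ for $y\in Y$); summing these over all teams gives $\OPT\ge \frac{2}{3}\bigl(\delta_Y(X)+\delta_X(Y)\bigr)=\frac{4}{3} W$, which is the desired bound.

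Next I would exploit the trip decomposition guaranteed by the \emph{Bounded-by-3} constraint. The itinerary of a team $x$ is a closed walk from $x$ that visits each opponent $y\in Y$ exactly once (for its away game), and it splits into trips, each a cycle through $x$ with at most three away venues. So it is enough to show, for a single trip visiting away venues $y_{i_1},\dots,y_{i_k}$ with $k\le 3$, that its weight is at least $\frac{2}{3}\sum_{j=1}^{k} w(x,y_{i_j})$; since the trips partition $Y$, summing this over all trips of $x$ yields $D(x)\ge\frac{2}{3}\,\delta_Y(x)$.

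The technical heart is this per-trip inequality for $k=3$, and it is also where the factor $2/3$ is forced. For a trip $x\to a\to b\to c\to x$, writing $p=w(x,a)$, $q=w(x,b)$, $r=w(x,c)$ and letting $T$ be the trip weight, I would apply the triangle inequality twice through the middle venue $b$: $p+w(a,b)\ge q$ and $r+w(b,c)\ge q$, which add to $T\ge 2q$. Combined with the trivial bound $T\ge p+r$ (dropping the two interior edges), the split $T=\frac{2}{3}T+\frac{1}{3}T\ge \frac{2}{3}(p+r)+\frac{2}{3}q=\frac{2}{3}(p+q+r)$ gives exactly what is needed. The cases $k\le 2$ are immediate, since then $T\ge p+q\ge\frac{2}{3}(p+q)$ for a two-venue trip and $T\ge\frac{2}{3}p$ for a one-venue trip. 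The only delicate point is the treatment of the middle venue in a three-venue trip, so I would be careful to set up the two triangle-inequality applications symmetrically around $b$ before combining them with the endpoint bound.
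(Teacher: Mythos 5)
Your proposal is correct: the reduction to the per-team bound $D(x)\ge\frac{2}{3}\delta_Y(x)$, the trip decomposition under the bounded-by-3 constraint, and the middle-venue estimate $T\ge 2q$ combined as $\frac{2}{3}T+\frac{1}{3}T$ all check out. The paper itself does not prove this lemma but imports it from~\cite{hoshino2013approximation}; your argument is essentially the standard one and coincides with the technique the paper uses for its new Lemma~\ref{lb2}, where the itinerary is likewise decomposed into $2$-, $3$-, and $4$-cycles through $x$ and the bound $w(xy_2)\leq\frac{1}{2}w(xy_1y_2y_3x)$ plays exactly the role of your two symmetric triangle-inequality applications around the middle venue.
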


Next, we propose a new lower bound that is related to the minimum weight cycle packings $\C_{X_\varepsilon}$ and $\C_{Y_\varepsilon}$.
\begin{lemma}\label{lb2}
For BTTP, it holds that $\delta_{Y_\varepsilon}(X_\varepsilon)+\frac{1}{2}n_\varepsilon w(\C_{Y_\varepsilon})+\delta_{X_\varepsilon}(Y_\varepsilon)+\frac{1}{2}n_\varepsilon w(\C_{X_\varepsilon})\leq \frac{3}{2}\cdot\OPT$.
\end{lemma}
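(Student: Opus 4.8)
The plan is to reduce the global inequality to a per-team estimate and then sum. Since every team's itinerary has nonnegative weight, $\OPT$ is at least the total travel of the core teams, so it suffices to bound the travel of each team in $X_\varepsilon\cup Y_\varepsilon$ from below. By the symmetry between the two leagues I would concentrate on a single team $y\in Y_\varepsilon$, writing $\mathrm{travel}(y)$ for the weight of its itinerary, and aim to prove
\[
\mathrm{travel}(y)\ \ge\ \tfrac23\,\delta_{X_\varepsilon}(y)+\tfrac13\,w(\C_{X_\varepsilon}).
\]
Summing this over all $y\in Y_\varepsilon$ gives $\sum_{y}\mathrm{travel}(y)\ge \tfrac23\delta_{X_\varepsilon}(Y_\varepsilon)+\tfrac13 n_\varepsilon w(\C_{X_\varepsilon})$; adding the mirror inequality for the teams in $X_\varepsilon$ and multiplying by $\tfrac32$ then reproduces exactly the claimed bound, using $\delta_{Y_\varepsilon}(X_\varepsilon)=\delta_{X_\varepsilon}(Y_\varepsilon)$.

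The engine of the per-team estimate is a local inequality for a single away-trip. By the bounded-by-3 constraint each trip of $y$ visits at most three opponents. For a trip that visits exactly three teams $z_1,z_2,z_3\in X_\varepsilon$ in this order, its weight is $w(y,z_1)+w(z_1,z_2)+w(z_2,z_3)+w(z_3,y)$, and I claim
\[
w(y,z_1)+w(z_1,z_2)+w(z_2,z_3)+w(z_3,y)\ \ge\ \tfrac23\big(w(y,z_1)+w(y,z_2)+w(y,z_3)\big)+\tfrac13\,w(C),
\]
where $C=z_1z_2z_3z_1$ is the 3-cycle on these vertices. This follows from the triangle inequality ($w(z_1,z_3)\le w(y,z_1)+w(y,z_3)$ together with $w(z_i,z_{i+1})\ge |w(y,z_i)-w(y,z_{i+1})|$) after a short case analysis on which radial distance $w(y,z_i)$ is largest; the analogous statements for trips of size one and two are immediate since there the trip weight already dominates the radial sum. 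This single-trip bound is the whole source of the new cycle-packing term.

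The remaining work is to turn $y$'s itinerary into an honest cycle packing of $X_\varepsilon$ so that the local bounds assemble into $\tfrac13 w(\C_{X_\varepsilon})$. First I would shortcut each trip to the subsequence of its $X_\varepsilon$-opponents: deleting the visited teams of $\overline{X_\varepsilon}$ can only decrease the trip weight by the triangle inequality, so $\mathrm{travel}(y)$ is at least the total weight of the resulting \emph{lobes} (closed walks through $y$), which partition $X_\varepsilon$ into groups of size at most three. A lobe of size three is already a 3-cycle and is charged by the displayed inequality. The size-one and size-two lobes are not cycles of order $\ge 3$, so they must be merged; because every size-three lobe accounts for exactly three vertices and $n_\varepsilon\equiv0\pmod 3$, the vertices lying in small lobes also number a multiple of three and can be grouped into cycles each of order at least three. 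For such a merged cycle $D$ obtained by concatenating lobes, bounding each connecting edge by the two radial edges it replaces yields $\sum_{\text{lobes in }D}\mathrm{(lobe\ weight)}\ge \tfrac23\sum_{z\in D}w(y,z)+\tfrac13 w(D)$, the crucial point being that a size-one or size-two lobe has weight at least its radial sum. Collecting all cycles produces a cycle packing $\C^{y}$ of $X_\varepsilon$ with $\mathrm{travel}(y)\ge \tfrac23\delta_{X_\varepsilon}(y)+\tfrac13 w(\C^{y})$, and $w(\C^{y})\ge w(\C_{X_\varepsilon})$ by minimality.

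I expect the main obstacle to be precisely this last bookkeeping step: the naive idea of closing up every trip into a cycle fails both because short trips do not form valid cycles and because a size-three trip whose opponents cluster far from $y$ can have weight below its radial sum, so a uniform per-lobe charge is impossible. The resolution—charging size-three lobes by the sharp local inequality while merging only the small lobes, whose weight always dominates their radial contribution—is the delicate part, and the divisibility $n_\varepsilon\equiv0\pmod3$ is what guarantees the merge can always be completed.
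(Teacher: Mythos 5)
Your proof is correct and is essentially the paper's own argument: the paper likewise reduces to the per-team bound $\delta_{X_\varepsilon}(y)+\tfrac12 w(\C_{X_\varepsilon})\le\tfrac32\cdot\OPT(y)$, shortcuts each itinerary past the non-core opponents, charges every three-opponent trip against the 3-cycle on its opponents via the same triangle-inequality estimate (the paper derives $T\ge\tfrac23 R+\tfrac13 w(C)$ in two steps, bounding the middle radial edge by half the trip and the closing chord by the path), and merges the short trips by shortcutting through the team itself---which is exactly your ``connecting edge bounded by two radial edges''---into a cycle whose order is at least $3$ thanks to $n_\varepsilon\equiv 0\pmod 3$, before invoking minimality of $\C_{X_\varepsilon}$. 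The only cosmetic differences are that the paper states your single-trip inequality as two separate bounds and patches all small lobes into one cycle rather than into several.
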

\begin{proof}
Fix an optimal solution of BTTP. For every team $v\in X\cup Y$, we use $\OPT(v)$ to denote the weight of its itinerary in the optimal solution. So, $\OPT=\sum_{v\in X\cup Y}\OPT(v)$. 
It is sufficient to prove $\delta_{Y_\varepsilon}(x)+\frac{1}{2}w(\C_{Y_\varepsilon})\leq \frac{3}{2}\cdot\OPT(x)$ and $\delta_{X_\varepsilon}(y)+\frac{1}{2}w(\C_{X_\varepsilon})\leq\frac{3}{2}\cdot\OPT(y)$ for any $x\in X_\varepsilon$ and $y\in Y_\varepsilon$. In the following, we only prove the first inequality since the second can be obtained similarly.


For a team $x\in X$, we decompose its itinerary in the optimal solution into a set of cycles $\C=\C_4\cup\C_3\cup\C_2$, where $\C_i$ is a set of $i$-cycles containing $x$. The cycles cover $x$ and all vertices in $Y$. We can shortcut vertices in $\overline{Y_\varepsilon}$ to obtain a new set of cycles $\C'=\C'_4\cup\C'_3\cup\C'_2$ that cover $x$ and all vertices in $Y_\varepsilon$ only, where $\C'_i$ is a new set of $i$-cycles containing $x$. By the triangle inequality, we have for every $x$
\[
w(\C')\leq w(\C)=\OPT(x).
\]

Since cycles in $\C$ share only one common vertex $x$, cycles in $\C'$ also share only one common vertex $x$. Moreover, since cycles in $\C'$ cover all vertices in $Y_\varepsilon$, we can get
\begin{equation}\label{new1}
\begin{aligned}
\delta_{Y_\varepsilon}(x)&=\sum_{{xy_1y_2y_3x \in\C'_4}}(w(xy_1)+w(xy_2)+w(xy_3))+\sum_{xy_1y_2x\in\C'_3}(w(xy_1)+w(xy_2))+\sum_{xy_1x\in\C'_2}w(xy_1)\\
&\leq\sum_{{xy_1y_2y_3x \in\C'_4}}(w(xy_1)+w(xy_3))+\frac{1}{2}w(\C'_4)+w(\C'_3)+\frac{1}{2}w(\C'_2)\\
&\leq\sum_{{xy_1y_2y_3x \in\C'_4}}(w(xy_1)+w(xy_3))+\frac{1}{2}w(\C'_4)+w(\C'_3)+w(\C'_2),
\end{aligned}
\end{equation}
where the first inequality is due to the following facts: for a 4-cycle $xy_1y_2y_3x\in\C'_4$, we have $w(xy_2)\leq \frac{1}{2}w(xy_1y_2y_3x)$ by the triangle inequality; for a 3-cycle $xy_1y_2y_3x\in\C'_3$, we have $w(xy_1)+w(xy_2)\leq w(xy_1y_2x)$; and for a 2-cycle $xy_1x\in\C'_2$ we have $w(xy_1)=\frac{1}{2}w(xy_1x)$.


Let $n'\coloneqq\size{\C'_4}$. 
Then, cycles in $\C'_4$ cover $3n'$ vertices in $Y_\varepsilon$, and cycles in $\C'_3\cup\C'_2$ cover $n_\varepsilon-3n'$ vertices in $Y_\varepsilon$.
Recall that cycles in $\C'_3\cup\C'_2$ share only one common vertex $x$. Then, we can get a single cycle $C$ by shortcutting $x$ that covers $n_\varepsilon-3n'$ vertices in $Y_\varepsilon$ and satisfies $w(C)\leq w(\C'_3\cup\C'_2)$ by the triangle inequality. 
Let $\C''=\bigcup_{xy_1y_2y_3\in\C'_4}\{y_1y_2y_3y_1\}\cup\{C\}$.
Since $n_\varepsilon$ is divisible by 3, we know that $n_\varepsilon-3n'$ is also divisible by 3. Thus, if $n_\varepsilon-3n'>0$, the order of $C$ is at least 3. Therefore, $\C''$ is a cycle packing in $G[Y_\varepsilon]$. Since $\C_{Y_\varepsilon}$ is a minimum weight cycle packing in $G[Y_\varepsilon]$, we have $w(\C_{Y_\varepsilon})\leq w(\C'')$. Hence, we have
\begin{equation}\label{new2}
w(\C_{Y_\varepsilon})\leq\sum_{{xy_1y_2y_3x \in\C'_4}}w(y_1y_2y_3y_1)+w(C)\leq\sum_{{xy_1y_2y_3x \in\C'_4}}2w(y_1y_2y_3)+w(\C'_3)+w(\C'_2),
\end{equation}
where the inequality is due to the fact that $w(y_1y_3)\leq w(y_1y_2y_3)$ by the triangle inequality, and $w(C)\leq w(\C'_3\cup\C'_2)=w(\C'_3)+w(\C'_2)$.

Therefore, by (\ref{new1}) and (\ref{new2}), we have
\begin{align*}
\delta_{Y_\varepsilon}(x)+\frac{1}{2}w(\C_{Y_\varepsilon})&\leq\sum_{{xy_1y_2y_3x \in\C'_4}}(w(xy_1)+w(xy_3))+\frac{1}{2}w(\C'_4)+w(\C'_3)+w(\C'_2)\\
&\quad\quad+\sum_{{xy_1y_2y_3x \in\C'_4}}w(y_1y_2y_3)+\frac{1}{2}w(\C'_3)+\frac{1}{2}w(\C'_2)\\
&=w(\C'_4)+\frac{1}{2}w(\C'_4)+\frac{3}{2}w(\C'_3)+\frac{3}{2}w(\C'_2)\\
&=\frac{3}{2}w(\C')\leq\frac{3}{2}w(\C)=\frac{3}{2}\cdot\OPT(x).
\end{align*}
\end{proof}

\subsection{Upper Bounds}\label{SC.4.1}
The 3-path construction provides a feasible solution if each team in $X\cup Y$ has a label. There are $n!\times n!$ ways to label teams in $X$ using $\{x_0,\dots,x_{n-1}\}$ and teams in $Y$ using $\{y_0,\dots,y_{n-1}\}$.
To get a good solution, we label them randomly. Note that in random operations, we operate uniformly without distinction.





\medskip
\noindent\textbf{Step~1.} Assign each vertex $x \in X$ (resp., $y\in Y$) with a cost of $\delta_Y(x)$ (resp., $\delta_X(y)$), and select the top $n_\varepsilon$ vertices with the highest costs to form $X_\varepsilon$ (resp., $Y_\varepsilon$).



\noindent\textbf{Step~2.} Generate a random permutation of the $md$ 3-paths in $\P_{X_\varepsilon}$ (resp., $\P_{Y_\varepsilon}$), and label them as $P_1,\dots,P_{md}$ (resp., $Q_1,\dots,Q_{md}$), respectively, where $P_i=x_{3i-3}x_{3i-2}x_{3i-1}$ and $Q_i=y_{3i-3}y_{3i-2}y_{3i-1}$.

\noindent\textbf{Step~3.} Take an arbitrary permutation of $l$ teams in $\overline{X_\varepsilon}$ (resp., $\overline{Y_\varepsilon}$), and label them as $x_{n_\varepsilon},\dots,x_{n-1}$ (resp., $y_{n_\varepsilon},\dots,y_{n-1}$), respectively.

\medskip

Note that {Step~1} can be done in $O(n^2)$ time, and {Step~2} and {Step~3} can be done in $O(n)$ time by using the Fisher–Yates shuffle~\cite{knuth1997art}. In {Step~2}, we assume that the two 3-path packings are given in advance. Moreover, when labeling a 3-path using $P_i=x_{3i-3}x_{3i-2}x_{3i-1}$, there are two choices based on the two opposite directions, and either of them is considered valid.

\begin{theorem}\label{res-construction}
For BTTP with $n\geq 108d^3$, given any 3-path packing $\P_{X_\varepsilon}$ (resp., $\P_{Y_\varepsilon}$) in $G[X_\varepsilon]$ (resp., $G[Y_\varepsilon]$), using the above randomized labeling method, our 3-path construction can generate a solution with an expected weight of at most $\delta_{Y_\varepsilon}(\P_{X_\varepsilon})+n_\varepsilon w(\P_{X_\varepsilon})+\delta_{X_\varepsilon}(\P_{Y_\varepsilon})+n_\varepsilon w(\P_{Y_\varepsilon})+\varepsilon\cdot\OPT$ in $O(n^2)$ time.
\end{theorem}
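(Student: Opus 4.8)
The plan is to compare the weight of the generated schedule against the \emph{ideal} schedule term by term and to show that every deviation from the ideal is, in expectation, only an $O(\varepsilon)$-fraction of $\OPT$. I would fix the decomposition of each itinerary into trips (each a cycle through the team's home visiting at most three away teams, by the bounded-by-3 constraint) and classify the away-trips of a team $y\in Y_\varepsilon$ into three types: \emph{clean trips}, arising inside normal super-games, which visit exactly the three teams of a single $3$-path of some $S_i$ in terminal-to-terminal order; \emph{left trips}, arising inside the $l$ left super-games $y$ attends; and \emph{boundary trips}, which straddle a time-slot boundary because $y$ does not return home there. The symmetric classification applies to $x\in X_\varepsilon$, and the two team-pair vertices are handled separately. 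A clean trip along a $3$-path $P=xx'x''$ contributes exactly $w(y,x)+w(P)+w(x'',y)$, and distinct clean trips use distinct $3$-paths, so the total weight of all clean trips is at most the ideal weight $\delta_{Y_\varepsilon}(\P_{X_\varepsilon})+n_\varepsilon w(\P_{X_\varepsilon})+\delta_{X_\varepsilon}(\P_{Y_\varepsilon})+n_\varepsilon w(\P_{Y_\varepsilon})$. Writing the actual total as (clean)$+$(left$+$boundary$+$last-slot actual weight), and noting that the clean part equals the ideal weight minus the nonnegative ideal contribution of the ``bad'' $3$-paths, it suffices to bound the actual weight of the left, boundary, last-slot, and team-pair trips by $\varepsilon\cdot\OPT$ in expectation.

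For these error terms I would use the crude estimate that any trip visiting teams $u_1,\dots,u_k$ ($k\le 3$) has weight at most $2\sum_i w(y,u_i)$ by the triangle inequality, and then show that only a tiny fraction of away-games fall into left or boundary trips. By Claim~\ref{unique} each super-team attends exactly $l$ of its $m$ super-games as left super-games, so the teams $y$ meets in left trips are precisely the $3$-paths lying in $l$ of the $m$ super-teams it plays; since Step~2 permutes the $3$-paths uniformly, this is a uniform $\tfrac{l}{m}$-fraction, giving expected left-trip weight at most $\tfrac{l}{m}\cdot 2\delta_{X_\varepsilon}(y)$. For boundary trips, a team shifted by one or two days inside a normal super-game fails to return home only at its single wrap-around $3$-path, so each super-team contributes one such $3$-path and there are $O(m)$ boundary trips among the $md$ trips; again by the random permutation these carry expected weight at most $O(1/d)\cdot 2\delta_{X_\varepsilon}(y)$. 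Summing over $y\in Y_\varepsilon$ and symmetrically over $X_\varepsilon$, using $\tfrac{l}{m}\le\tfrac{1}{2d}$ and $\tfrac{1}{d}\le\tfrac{\varepsilon}{6}$ from $d=6\lceil 1/\varepsilon\rceil$, together with $\delta_X(Y)\le\tfrac34\OPT$ from Lemma~\ref{lb1}, makes each of these sums $O(\varepsilon)\cdot\OPT$. The last time slot contributes only $O(l)$ games per team and is handled by the same crude bound and random permutation.

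For the team-pair vertices, a vertex $x_i\in\overline{X_\varepsilon}$ plays every away game as a $2$-cycle (alternating home/away in the left super-games, and in the last time slot), so its weight is at most $2\delta_Y(x_i)$, and summing gives at most $2\delta_Y(\overline{X_\varepsilon})+2\delta_X(\overline{Y_\varepsilon})$. Here Step~1 is essential: since $\overline{X_\varepsilon}$ is the set of $l$ vertices of smallest cost $\delta_Y(\cdot)$, we have $\delta_Y(\overline{X_\varepsilon})\le\tfrac{l}{n}\delta_Y(X)$, and with $l=O_\varepsilon(1)$ and $n\ge 108d^3$ this is again $O(\varepsilon)\cdot\OPT$ via Lemma~\ref{lb1}. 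Collecting all pieces and fixing the constants in $d$ and $m$ so that the several $O(\varepsilon)$ terms sum to at most $\varepsilon$, the expected total weight is at most the ideal weight plus $\varepsilon\cdot\OPT$; the $O(n^2)$ running time is dominated by computing the costs in Step~1 and by writing down the $2n\times 2n$ schedule.

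The hardest part will be the boundary trips: one must verify exactly which teams fail to return home at each interior time-slot boundary, confirm that the tail of one super-game's away-games fuses with the head of the next into a single legal trip with at most three away games, and show that this involves precisely one $3$-path per super-team (so the $O(1/d)$ fraction, and hence the identification of clean-trip weight with the ideal weight, is genuinely valid). Care is also needed to ensure that no away-game is double-counted across the clean/left/boundary classification, that the negligible first/last-slot boundary effects are absorbed, and that the random-permutation expectation is applied to a set of $3$-paths sitting at distinct positions and therefore genuinely uniform.
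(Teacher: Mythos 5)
Your proposal is correct and follows essentially the same route as the paper's proof of Theorem~\ref{res-construction}: the paper's two ``return home'' assumptions (justified once by the triangle inequality) play exactly the role of your clean/left/boundary classification --- each normal super-game yields $d$ clean trips or $d+1$ trips with one broken $3$-path bounded by $2(w(x,y_0')+w(x,y_1')+w(x,y_2'))$ --- and the paper derives the same $\frac{2l}{m}\delta_{Y_\varepsilon}(x)$ and $\frac{1}{md}\delta_{Y_\varepsilon}(x)$ expectations from the Step~2 permutation, uses Step~1 as you do to get $\delta_Y(\overline{X_\varepsilon})\leq\frac{9d}{n}\delta_Y(X)$, and converts all error terms to $\varepsilon\cdot\OPT$ via Lemma~\ref{lb1}. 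The one item to fold in explicitly is the term the paper lists as the first part for $x\in X_\varepsilon$, namely the round trips of core teams to the homes of the team-pair vertices inside left super-games, $2\delta_{\overline{Y_\varepsilon}}(x)$; your stated left-trip bound $\frac{l}{m}\cdot 2\delta_{X_\varepsilon}(y)$ covers only the opposing super-team and misses this team-pair opponent, but the same Step~1 selection you already invoke bounds the total by $2\delta_{\overline{Y_\varepsilon}}(X_\varepsilon)\leq\frac{1}{d}\delta_X(Y)$, so nothing breaks.
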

\begin{proof}
Clearly, our 3-path construction takes $O(n^2)$ time. 

Then, we make two assumptions for the sake of analysis, which do not decrease the weight of our schedule by the triangle inequality.
Firstly, we assume that every participant team returns home both before and after each day's game in left super-games and the last time slot. Secondly, we assume that all participant teams return home before the game on the first day and after the game on the last day in normal super-games.

By {Step~1} of our randomized labeling algorithm, since we select the top $n_\varepsilon$ vertices from $X$ with the highest costs to form $X_\varepsilon$, we get 
\begin{equation*}
\delta_Y(X_\varepsilon)=\sum_{x\in X_\varepsilon}\delta_Y(x)\geq \frac{n_\varepsilon}{n}\sum_{x\in X}\delta_Y(x)=\frac{n_\varepsilon}{n}\delta_Y(X).
\end{equation*}
Alternatively, since $l=n-n_\varepsilon\leq 9d$ by (\ref{eq0}), we have
\begin{equation}\label{eq1}
\delta_{\overline{X_\varepsilon}}(Y)=\delta_Y(\overline{X_\varepsilon})\leq \lrA{1-\frac{n_\varepsilon}{n}}\delta_Y(X)\leq\frac{9d}{n}\delta_Y(X).
\end{equation}
Similarly, we have 
\begin{equation}\label{eq2}
\delta_{\overline{Y_\varepsilon}}(X)=\delta_X(\overline{Y_\varepsilon})\leq \lrA{1-\frac{n_\varepsilon}{n}}\delta_X(Y)\leq\frac{9d}{n}\delta_X(Y).
\end{equation}


Consider a team $x\in \overline{X_\varepsilon}$.
By assumptions, in our schedule the weight of its itinerary is $\sum_{y\in Y}2w(x,y)=2\delta_Y(x)$. The total weight of itineraries of teams in $\overline{X_\varepsilon}$ is $\sum_{x\in \overline{X_\varepsilon}}2\delta_Y(x)=2\delta_Y(\overline{X_\varepsilon})\leq \frac{18d}{n}\delta_Y(X)\leq\frac{1}{d}\delta_Y(X)$ by (\ref{eq1}) and $n\geq 108d^3$.

Fix a team $x\in S_i\subseteq X_\varepsilon$. We have three parts of its trips.
\begin{enumerate}
    \item The weight of its trips for playing teams in $\overline{Y_\varepsilon}$ is $\sum_{y\in \overline{Y_\varepsilon}}2w(x,y)=2\delta_{\overline{Y_\varepsilon}}(x)$ by assumptions since these games are in left super-games.
    \item In a left super-game between $S_i$ and $T_j$, the weight of its trips for playing teams in $T_j$ is $\sum_{y\in T_j}2w(x,y)=2\delta_{T_j}(x)$. We can get $\EE{\delta_{T_j}(x)}=\frac{1}{m}\delta_{Y_\varepsilon}(x)$ by {Step~2} of our randomized labeling algorithm. Since $S_i$ plays $l$ left super-games in total, the expected weight of its trips for playing teams in a super-team that plays a left super-game with $S_i$ is $\frac{2l}{m}\delta_{Y_\varepsilon}(x)\leq \frac{1}{d}\delta_{Y_\varepsilon}(x)$ due to $l\leq 9d$ and $m\geq18d^2$ by (\ref{eq0}).
    \item In a normal super-game between $S_i$ and $T_j$, it plays $d$ or $d+1$ trips along at least $d-1$ 3-paths in $\Q_j\coloneqq\{Q_{(j-1)d+1},\dots,Q_{jd}\}$ from one terminal to another (see Table~\ref{normal-example}). If it plays $d$ trips along $d$ 3-paths in $\Q_j$, the weight of these trips is $\delta_{x}(\Q_j)+w(\Q_j)$; if it plays $d+1$ trips along $d-1$ 3-paths in $\Q_j$, it does not follow a 3-path $y'_0y'_1y'_2$ in $\Q_j$, the weight of the two trips for playing teams in $\{y'_0,y'_1,y'_2\}$ is bounded by $2w(x,y'_0)+2w(x,y'_1)+2w(x,y'_2)$ by the triangle inequality (Note that, alternatively, here we can assume that each team returns home both before and after each day's game during trips where they visit only two teams.), and then the weight of the $d+1$ trips is bounded by $\delta_{x}(\Q_j)+w(\Q_j)+2w(x,y'_0)+2w(x,y'_1)+2w(x,y'_2)$. We have $\EE{\delta_{x}(\Q_j)}=\frac{1}{m}\delta_{x}(\P_{Y_\varepsilon})$, $\EE{w(\Q_j)}=\frac{1}{m}w(\P_{Y_\varepsilon})$, and $\EE{w(x,y'_0)+w(x,y'_1)+w(x,y'_2)}=\frac{1}{md}\delta_{Y_\varepsilon}(x)$ by {Step~2} of our randomized labeling algorithm. Since $S_i$ plays $m-l\leq m$ normal super-games, the expected weight of trips for playing teams in a super-team that plays a normal super-game with $S_i$ is at most $\delta_{x}(\P_{Y_\varepsilon})+w(\P_{Y_\varepsilon})+\frac{1}{d}\delta_{Y_\varepsilon}(x)$.
\end{enumerate}
So, the expected weight of the itinerary of team $x$ is at most $2\delta_{\overline{Y_\varepsilon}}(x)+\frac{2}{d}\delta_{Y_\varepsilon}(x)+\delta_{x}(\P_{Y_\varepsilon})+ w(\P_{Y_\varepsilon})$.
Hence, the total expected weight of itineraries of teams in $X_\varepsilon$ is at most 
\begin{align*}
&2\delta_{\overline{Y_\varepsilon}}(X_\varepsilon)+\frac{2}{d}\delta_{Y_\varepsilon}(X_\varepsilon)+\delta_{X_\varepsilon}(\P_{Y_\varepsilon})+n_\varepsilon w(\P_{Y_\varepsilon})\leq \frac{3}{d}\delta_X(Y)+\delta_{X_\varepsilon}(\P_{Y_\varepsilon})+n_\varepsilon w(\P_{Y_\varepsilon})
\end{align*}
since we have $2\delta_{\overline{Y_\varepsilon}}(X_\varepsilon)\leq 2\delta_{\overline{Y_\varepsilon}}(X)\leq \frac{1}{d}\delta_{X}(Y)$ by (\ref{eq2}) and $n\geq 108d^3$, and $\delta_{Y_\varepsilon}(X_\varepsilon)\leq \delta_{Y}(X)=\delta_{X}(Y)$.

Therefore, the total expected weight of itineraries of teams in $X$ is bounded by $\frac{4}{d}\delta_X(Y)+\delta_{X_\varepsilon}(\P_{Y_\varepsilon})+n_\varepsilon w(\P_{Y_\varepsilon})$. 
Similarly, for teams in $Y$, the expected weight is bounded by $\frac{4}{d}\delta_Y(X)+\delta_{Y_\varepsilon}(\P_{X_\varepsilon})+n_\varepsilon w(\P_{X_\varepsilon})$.

Since $d=6\ceil{1/\varepsilon}\geq6/\varepsilon$ by our setting and $\delta_Y(X)=\delta_X(Y)\leq (3/4)\cdot\OPT$ by Lemma~\ref{lb1}, our schedule has an expected weight of at most $\varepsilon\cdot\OPT+\delta_{Y_\varepsilon}(\P_{X_\varepsilon})+n_\varepsilon w(\P_{X_\varepsilon})+\delta_{X_\varepsilon}(\P_{Y_\varepsilon})+n_\varepsilon w(\P_{Y_\varepsilon})$.
\end{proof}


\subsection{The 3-Path Packing}\label{SC.4.3}
In this section, we will obtain novel 3-path packings $\P_{X_\varepsilon}$ in $G[X_\varepsilon]$ and $\P_{Y_\varepsilon}$ in $G[Y_\varepsilon]$ such that 
\[
\delta_{Y_\varepsilon}(\P_{X_\varepsilon})+n_\varepsilon w(\P_{X_\varepsilon})+\delta_{X_\varepsilon}(\P_{Y_\varepsilon})+n_\varepsilon w(\P_{Y_\varepsilon})\leq \delta_{Y_\varepsilon}(X_\varepsilon)+\frac{1}{2}n_\varepsilon w(\C_{X_\varepsilon})+\delta_{X_\varepsilon}(Y_\varepsilon)+\frac{1}{2}n_\varepsilon w(\C_{Y_\varepsilon}).
\]
By Lemma~\ref{lb2} and Theorem~\ref{res-construction}, these two 3-path packings directly imply a $(3/2+\varepsilon)$-approximation for BTTP.

Next, we only show how to obtain $\P_{X_\varepsilon}$ in $G[X_\varepsilon]$ such that $\delta_{Y_\varepsilon}(\P_{X_\varepsilon})+n_\varepsilon w(\P_{X_\varepsilon})\leq \delta_{Y_\varepsilon}(X_\varepsilon)+\frac{1}{2}n_\varepsilon w(\C_{X_\varepsilon})$ since we can get $\P_{Y_\varepsilon}$ in a similar way. We define some notations.

Since $\C_{X_\varepsilon}$ is a cycle packing in $G[X_\varepsilon]$, the order of each cycle in $\C_{X_\varepsilon}$ is at least 3 and at most $n_\varepsilon$. Let $\B_q$ be the set of all $q$-cycles in $\C_{X_\varepsilon}$. So, we have $\C_{X_\varepsilon}=\bigcup_{q=3}^{n_\varepsilon}\B_q$.

We will use $\C_{X_\varepsilon}$ to obtain a cycle packing $\C$ in $G[X_\varepsilon]$ such that the order of each cycle in $\C$ is divisible by 3, and then use $\C$ to obtain the 3-path packing $\P_{X_\varepsilon}$. The algorithm for $\C$ is shown as follows.

\medskip
\noindent\textbf{Step~1.} Initialize $\C=\emptyset$.

\noindent\textbf{Step~2.} For each 3-cycle in $\B_3$, we directly choose it into $\C$. 

\noindent\textbf{Step~3.} For each $q$-cycle $C_q\in\C_{X_\varepsilon}\setminus\B_3$, obtain a $q$-path $P_q$ by deleting the edge $xx'\in C_q$ such that $\delta_{Y_\varepsilon}(x)+\delta_{Y_\varepsilon}(x')-n_\varepsilon w(x,x')$ is minimized. Then, patch all paths into a single cycle $C$ arbitrarily, and select the cycle into the packing.
\medskip

An illustration of the patching operation in {Step 3} can be seen in Figure~\ref{fig03}.
\begin{figure}[ht]
\centering
\begin{tikzpicture}[scale=0.9]
\draw[very thick] (-2,0.5).. controls (-1.5,1.5) and (1.5,1.5) ..(2,0.5);
\draw[very thick] (-2,-0.5).. controls (-1.5,-1) ..(-0.5,-1);
\draw[very thick] (2,-0.5).. controls (1.5,-1) ..(0.5,-1);
\draw[very thick,dotted] (2,-0.5) to (2,0.5);
\draw[very thick,dotted] (-2,-0.5) to (-2,0.5);
\draw[very thick,dotted] (-0.5,-1) to (0.5,-1);

\node[left] at (-2,0) {\small $e_1$};
\node[right] at (2,0) {\small $e_3$};
\node[below] at (0,-1) {\small $e_2$};

\filldraw [black]
(2,-0.5) circle [radius=2pt]
(2,0.5) circle [radius=2pt]
(-2,-0.5) circle [radius=2pt]
(-2,0.5) circle [radius=2pt]
(-0.5,-1) circle [radius=2pt]
(0.5,-1) circle [radius=2pt];
\end{tikzpicture}
\caption{An illustration of patching paths into a cycle: there are three paths, denoted by solid lines, and they are patched into a cycle using three edges $e_1,e_2,e_3$, denoted by dashed lines.}
\label{fig03}
\end{figure}
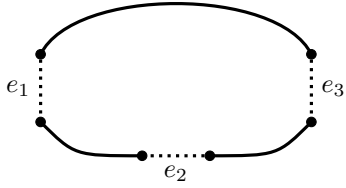
Moreover, it is easy to see that the order of each cycle in $\C$ is divisible by 3, and the algorithm takes $O(n^3)$ time, dominated by computing the minimum weight cycle packing $\C_{X_\varepsilon}$~\cite{hartvigsen1984extensions}.

\begin{lemma}\label{f1}
$n_\varepsilon w(\C)\leq\frac{1}{2}\delta_{Y_\varepsilon}(X_\varepsilon)+\frac{3}{4}n_\varepsilon w(\C_{X_\varepsilon})$.
\end{lemma}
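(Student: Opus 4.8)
The plan is to reduce everything to one elementary consequence of the triangle inequality: for any two vertices $a,b\in X_\varepsilon$, summing $w(a,b)\le w(a,y)+w(y,b)$ over all $y\in Y_\varepsilon$ gives
\[
n_\varepsilon\, w(a,b)\le \delta_{Y_\varepsilon}(a)+\delta_{Y_\varepsilon}(b).
\]
This single inequality will control both the edges introduced while patching in Step~3 and the $3$-cycles retained in Step~2.

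First I would split $w(\C)=w(\B_3)+w(C)$, where $C$ is the single cycle assembled in Step~3 from the paths $P_q$ coming from the cycles $C_q\in\C_{X_\varepsilon}\setminus\B_3$ (each such $C_q$ has order $q\ge4$). Writing $P_q$ as $C_q$ with its deleted edge $x_qx_q'$ removed, we have $w(C)=\sum_q\bigl(w(C_q)-w(x_q,x_q')\bigr)+\sum_e w(e)$, the last sum running over the patching edges. Since $C$ is a single cycle threading all the paths, each terminal $x_q$ or $x_q'$ is incident to exactly one patching edge; applying the key inequality to every patching edge and collecting the right-hand sides by terminal yields $n_\varepsilon\sum_e w(e)\le\sum_q\bigl(\delta_{Y_\varepsilon}(x_q)+\delta_{Y_\varepsilon}(x_q')\bigr)$. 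Combining these,
\[
n_\varepsilon\, w(C)\le n_\varepsilon\sum_q w(C_q)+\sum_q\bigl(\delta_{Y_\varepsilon}(x_q)+\delta_{Y_\varepsilon}(x_q')-n_\varepsilon\, w(x_q,x_q')\bigr),
\]
where the bracketed term is exactly the quantity minimized when choosing the edge to delete in Step~3.

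The crux is the averaging step. For each $C_q$ the bracketed term is the minimum over the $q$ edges of $C_q$ of $\delta_{Y_\varepsilon}(x)+\delta_{Y_\varepsilon}(x')-n_\varepsilon w(x,x')$, hence at most the average $\frac{1}{q}\sum_{xx'\in C_q}\bigl(\delta_{Y_\varepsilon}(x)+\delta_{Y_\varepsilon}(x')-n_\varepsilon w(x,x')\bigr)=\frac{1}{q}\bigl(2\sum_{x\in C_q}\delta_{Y_\varepsilon}(x)-n_\varepsilon w(C_q)\bigr)$, using that every vertex of a cycle has degree $2$. Summing the key inequality over the edges of $C_q$ shows this average is nonnegative, so replacing $\frac1q$ by $\frac14$ (legitimate since $q\ge4$ and the expression is nonnegative) only weakens the bound. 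This converts the estimate into $n_\varepsilon w(C)\le \frac34 n_\varepsilon\, w(\C_{X_\varepsilon}\setminus\B_3)+\frac12\sum_{x\in\C_{X_\varepsilon}\setminus\B_3}\delta_{Y_\varepsilon}(x)$.

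Finally I would dispatch the $\B_3$ term by the same inequality: applying it to the three edges of each $3$-cycle $xyz\in\B_3$ gives $n_\varepsilon w(xyz)\le 2\bigl(\delta_{Y_\varepsilon}(x)+\delta_{Y_\varepsilon}(y)+\delta_{Y_\varepsilon}(z)\bigr)$, hence $n_\varepsilon w(\B_3)\le 2\sum_{x\in\B_3}\delta_{Y_\varepsilon}(x)$, which rearranges to $n_\varepsilon w(\B_3)\le \frac34 n_\varepsilon w(\B_3)+\frac12\sum_{x\in\B_3}\delta_{Y_\varepsilon}(x)$. Adding this to the bound for $C$ and using $w(\B_3)+w(\C_{X_\varepsilon}\setminus\B_3)=w(\C_{X_\varepsilon})$ together with the fact that the vertices of $\B_3$ and of $\C_{X_\varepsilon}\setminus\B_3$ partition $X_\varepsilon$ (so the two vertex sums combine into $\delta_{Y_\varepsilon}(X_\varepsilon)$) yields exactly $n_\varepsilon w(\C)\le \frac12\delta_{Y_\varepsilon}(X_\varepsilon)+\frac34 n_\varepsilon w(\C_{X_\varepsilon})$. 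I expect the only delicate points to be the bookkeeping for the patching edges (verifying each terminal is covered exactly once) and getting the averaging direction right, namely confirming nonnegativity of the averaged expression so that the bound $q\ge4$ may be invoked.
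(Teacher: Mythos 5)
Your proof is correct and follows essentially the same route as the paper's: the same triangle-inequality fact $n_\varepsilon w(x,x')\le\delta_{Y_\varepsilon}(x)+\delta_{Y_\varepsilon}(x')$, the same split into $\B_3$ plus the patched cycle, the same bounding of the patching edges by the terminal costs, and the same minimum-versus-average argument over the edges of each $C_q$. Your replacement of $\frac{1}{q}$ by $\frac{1}{4}$ on the nonnegative bracket $2\sum_{x\in C_q}\delta_{Y_\varepsilon}(x)-n_\varepsilon w(C_q)$ is just a repackaging of the paper's inequality $\frac{2}{q}\mu_q+\frac{q-1}{q}\lambda_q\le\frac{1}{2}\mu_q+\frac{3}{4}\lambda_q$, which rests on the identical fact $\lambda_q\le 2\mu_q$.
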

\begin{proof}
We define some notations.

We may consider that each vertex $x\in X_\varepsilon$ has a cost of $\delta_{Y_\varepsilon}(x)$. 
Let $\mu_q=\sum_{C\in\B_q}\sum_{x\in C}\delta_{Y_\varepsilon}(x)$ and $\lambda_q=n_\varepsilon w(\B_q)$, where $\mu_q$ is the total cost of vertices that appear in a $q$-cycle in $\B_q$. Then, we have
\begin{equation}\label{eq3}
\delta_{Y_\varepsilon}(X_\varepsilon)=\sum_{3\leq q\leq n_\varepsilon}\mu_q~~~\mbox{and}~~~n_\varepsilon w(\C_{X_\varepsilon})=\sum_{3\leq q\leq n_\varepsilon}\lambda_q.
\end{equation}
For any $x,x'\in X_\varepsilon$, by the triangle inequality, we have 
\begin{equation}\label{degree}
n_\varepsilon w(x,x')\leq\sum_{y\in Y_\varepsilon}(w(x,y)+w(x',y))=\delta_{Y_\varepsilon}(x)+\delta_{Y_\varepsilon}(x').
\end{equation}
By (\ref{degree}), for any $C\in\C_{X_\varepsilon}$, $n_\varepsilon w(C)=\sum_{xx'\in C}n_\varepsilon w(x,x')\leq\sum_{xx'\in C}(\delta_{Y_\varepsilon}(x)+\delta_{Y_\varepsilon}(x'))=2\sum_{x\in C}\delta_{Y_\varepsilon}(x)$. So, we have
\begin{align}\label{d<=2D}
\begin{split}
\lambda_q=n_\varepsilon w(\B_q)&=\sum_{C\in\B_q}n_\varepsilon w(C)\leq 2\sum_{C\in\B_q}\sum_{x\in C}\delta_{Y_\varepsilon}(x)=2\mu_q. 
\end{split}
\end{align}
Let $\mu_{>3}\coloneqq\sum_{q=4}^{n_\varepsilon}\mu_q$ and $\lambda_{>3}\coloneqq\sum_{q=4}^{n_\varepsilon}\lambda_q$. By (\ref{eq3}), we have
\begin{equation}\label{d=D}
\delta_{Y_\varepsilon}(X_\varepsilon)=\mu_3+\mu_{>3}\quad\mbox{and}\quad n_\varepsilon w(\C_{X_\varepsilon})=\lambda_3+\lambda_{>3}.
\end{equation}

Next, we are ready to prove the lemma.

By {Step 2} of the algorithm, the contributions of cycles in $\B_3$ on $n_\varepsilon w(\C)$ are $n_\varepsilon w(\B_3)=\lambda_3$. 


Then, we consider the cycles in $\C_{X_\varepsilon}\setminus\B_3$. By {Step 3}, the contributions of them on $n_\varepsilon w(\C)$ is 
$n_\varepsilon w(C)$, where $C$ is obtained by patching a set of paths, denoted by $\P$. The paths in $\P$ are patched into the cycle $C$ using a set of vertex-disjoint edges (see Figure~\ref{fig03}), denoted by $E'$.
By~(\ref{degree}), $\sum_{e\in E'}n_\varepsilon w(e)$ is bounded by the cost of terminals of paths in $\P$, i.e., $\delta_{Y_\varepsilon}(\P)$. Hence, we have 
\[
n_\varepsilon w(C)\leq n_\varepsilon w(\P)+\delta_{Y_\varepsilon}(\P).
\]
For each $q$-cycle in $\C_{X_\varepsilon}\setminus\B_3$, the contribution of the cycle on $n_\varepsilon w(\P)+\delta_{Y_\varepsilon}(\P)$ is $n_\varepsilon w(P_q)+\delta_{Y_\varepsilon}(P_q)$, where the $q$-path $P_q$ is obtained by deleting the edge $xx'\in C_q$ with $\delta_{Y_\varepsilon}(x)+\delta_{Y_\varepsilon}(x')-n_\varepsilon w(x,x')$ minimized by {Step 3}. Then, we have
\begin{align*}
\delta_{Y_\varepsilon}(x)+\delta_{Y_\varepsilon}(x')-n_\varepsilon w(x,x')&\leq \frac{1}{q}\sum_{x''x'''\in C_q}(\delta_{Y_\varepsilon}(x'')+\delta_{Y_\varepsilon}(x''')-n_\varepsilon w(x'',x'''))\\
&=\frac{2}{q}\sum_{x''\in C_q}\delta_{Y_\varepsilon}(x'')-\frac{1}{q}n_\varepsilon w(C_q).
\end{align*}
Hence, we have 
\begin{align*}
n_\varepsilon w(P_q)+\delta_{Y_\varepsilon}(P_q)&=n_\varepsilon w(C_q)+\delta_{Y_\varepsilon}(x)+\delta_{Y_\varepsilon}(x')-n_\varepsilon w(x,x')\\
&\leq \frac{2}{q}\sum_{x''\in C_q}\delta_{Y_\varepsilon}(x'')+\frac{q-1}{q}n_\varepsilon w(C_q)
\end{align*}
Then, we have
\begin{align*}
n_\varepsilon w(\P)+\delta_{Y_\varepsilon}(\P)&=\sum_{P_q\in\P}\lrA{n_\varepsilon w(P_q)+\delta_{Y_\varepsilon}(P_q)}\\
&\leq\sum_{C_q\in\C_{X_\varepsilon}\setminus\B_3}\lrA{\frac{2}{q}\sum_{x''\in C_q}\delta_{Y_\varepsilon}(x'')+\frac{q-1}{q}n_\varepsilon w(C_q)}\\
&=\sum_{q=4}^{n_\varepsilon}\sum_{C_q\in\B_q}\lrA{\frac{2}{q}\sum_{x''\in C_q}\delta_{Y_\varepsilon}(x'')+\frac{q-1}{q}n_\varepsilon w(C_q)}\\
&=\sum_{q=4}^{n_\varepsilon}\lrA{\frac{2}{q}\mu_q+\frac{q-1}{q}\lambda_q}\\
&\leq \sum_{q=4}^{n_\varepsilon}\lrA{\frac{2}{4}\mu_q+\frac{3}{4}\lambda_q}=\frac{1}{2}\mu_{>3}+\frac{3}{4}\lambda_{>3},
\end{align*}
where the last inequality follows from $\frac{2}{q}\mu_q+\frac{q-1}{q}\lambda_q\leq \frac{2}{4}\mu_q+\frac{3}{4}\lambda_q$ for any $q\geq 4$ since $\lambda_q\leq2\mu_q$ for any $q\geq 4$ by (\ref{d<=2D}). 

Therefore, we have 
\begin{align*}
n_\varepsilon w(\C)&=n_\varepsilon w(\B_3)+n_\varepsilon w(C)\\
&\leq\lambda_3+\frac{1}{2}\mu_{>3}+\frac{3}{4}\lambda_{>3}\\
&\leq \frac{1}{2}\mu_3+\frac{1}{2}\mu_{>3}+\frac{3}{4}\lambda_3+\frac{3}{4}\lambda_{>3}\\
&=\frac{1}{2}\delta_{Y_\varepsilon}(X_\varepsilon)+\frac{3}{4}n_\varepsilon w(\C_{X_\varepsilon}),
\end{align*}
where the last inequality follows from $\lambda_3\leq \frac{1}{2}\mu_3+\frac{3}{4}\lambda_3$ since $\lambda_3\leq2\mu_3$ by (\ref{d<=2D}), and the last equality follows from (\ref{d=D}).
\end{proof}

Next, we are ready to demonstrate the algorithm for $\P_{X_\varepsilon}$.

\medskip
\noindent\textbf{Step~1.} Compute in $O(n^3)$ time the cycle packing $\C$ in $G[X_\varepsilon]$ using the algorithm before Lemma~\ref{f1}.

\noindent\textbf{Step~2.} Initialize $\P_{X_\varepsilon}=\emptyset$.

\noindent\textbf{Step~3.} For each cycle $C\in\C$, orient it in an arbitrary direction, obtain a set of 3-paths $\P_C$ by deleting an edge every 2 edges along the oriented cycle starting from an initial vertex of $C$ to minimize $\delta_{Y_\varepsilon}(\P_C)+n_\varepsilon w(\P_C)$, and update $\P_{X_\varepsilon}\coloneqq \P_{X_\varepsilon}\cup \P_C$.
\medskip

Note that in Step 3, for each cycle $C\in\C$, there are $\size{C}$ ways to choose the initial vertex. By enumerating all $\size{C}$ cases and selecting the best one, we can find $\P_C$ in $O(\size{C}^2)$ time such that $\delta_{Y_\varepsilon}(\P_C)+n_\varepsilon w(\P_C)$ is minimized. 
This ensures that Step 3 takes $O(n^2)$ time. In fact, since $\size{C}\bmod 3=0$, it is sufficient to consider an initial vertex from only 3 consecutive vertices of $C$, reducing the number of cases to just 3.

Therefore, the algorithm for $\P_{X_\varepsilon}$ takes $O(n^3)$ time, dominated by computing $\C$ in Step 1.

\begin{theorem}\label{3-path}
The above algorithm takes $O(n^3)$ time to get a 3-path packing $\P_{X_\varepsilon}$ in $G[X_\varepsilon]$ such that $\delta_{Y_\varepsilon}(\P_{X_\varepsilon})+n_\varepsilon w(\P_{X_\varepsilon})\leq \delta_{Y_\varepsilon}(X_\varepsilon)+\frac{1}{2}n_\varepsilon w(\C_{X_\varepsilon})$. Similarly, there exists a $O(n^3)$-time algorithm to get a 3-path packing $\P_{Y_\varepsilon}$ in $G[Y_\varepsilon]$ with $\delta_{X_\varepsilon}(\P_{Y_\varepsilon})+n_\varepsilon w(\P_{Y_\varepsilon})\leq \delta_{X_\varepsilon}(Y_\varepsilon)+\frac{1}{2}n_\varepsilon w(\C_{Y_\varepsilon})$.
\end{theorem}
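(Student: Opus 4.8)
The plan is to bound the cost contribution of each cycle $C\in\C$ separately under the path-extraction of Step~3, then sum over all cycles and invoke Lemma~\ref{f1}. The runtime claim is immediate: Step~1 costs $O(n^3)$ (dominated by computing $\C_{X_\varepsilon}$ via the minimum weight cycle packing), and as already observed Step~3 costs $O(n^2)$, so the total is $O(n^3)$.

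Fix a cycle $C\in\C$ of order $q$, where $q$ is divisible by $3$. Breaking $C$ into $3$-paths by deleting an edge every two edges amounts to removing a set $D$ of $q/3$ pairwise non-adjacent edges, leaving $q/3$ vertex-disjoint $3$-paths $\P_C$. A vertex of $C$ becomes a path terminal exactly when it is an endpoint of a deleted edge, and since deleted edges are non-adjacent no vertex is counted twice; hence the two quantities of interest decompose as
\begin{equation*}
\delta_{Y_\varepsilon}(\P_C)+n_\varepsilon w(\P_C)=n_\varepsilon w(C)+\sum_{xx'\in D}\bigl(\delta_{Y_\varepsilon}(x)+\delta_{Y_\varepsilon}(x')-n_\varepsilon w(x,x')\bigr),
\end{equation*}
using $w(\P_C)=w(C)-\sum_{xx'\in D}w(x,x')$ and $\delta_{Y_\varepsilon}(\P_C)=\sum_{xx'\in D}(\delta_{Y_\varepsilon}(x)+\delta_{Y_\varepsilon}(x'))$.

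The crux is an averaging argument over the three admissible offsets. Choosing the initial vertex selects one of three deletion patterns $D_0,D_1,D_2$, consisting of the cycle edges at positions $\equiv 0,1,2\pmod 3$ respectively; because $q\equiv0\pmod3$, these three sets partition the edge set of $C$. Summing the correction term over all three patterns therefore yields $\sum_{xx'\in C}(\delta_{Y_\varepsilon}(x)+\delta_{Y_\varepsilon}(x')-n_\varepsilon w(x,x'))=2\sum_{x\in C}\delta_{Y_\varepsilon}(x)-n_\varepsilon w(C)$, where each vertex meets two cycle edges and $\sum_{xx'\in C}w(x,x')=w(C)$. Since Step~3 picks the minimizing pattern, its correction term is at most the average of the three, giving
\begin{equation*}
\delta_{Y_\varepsilon}(\P_C)+n_\varepsilon w(\P_C)\leq \tfrac{2}{3}n_\varepsilon w(C)+\tfrac{2}{3}\sum_{x\in C}\delta_{Y_\varepsilon}(x).
\end{equation*}

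Finally I would sum over all $C\in\C$. Since the cycles of $\C$ cover $X_\varepsilon$, one gets $\sum_{C}\sum_{x\in C}\delta_{Y_\varepsilon}(x)=\delta_{Y_\varepsilon}(X_\varepsilon)$ and $\sum_C w(C)=w(\C)$, so $\delta_{Y_\varepsilon}(\P_{X_\varepsilon})+n_\varepsilon w(\P_{X_\varepsilon})\leq\tfrac{2}{3}n_\varepsilon w(\C)+\tfrac{2}{3}\delta_{Y_\varepsilon}(X_\varepsilon)$. Substituting the bound $n_\varepsilon w(\C)\leq\tfrac12\delta_{Y_\varepsilon}(X_\varepsilon)+\tfrac34 n_\varepsilon w(\C_{X_\varepsilon})$ from Lemma~\ref{f1} makes the right-hand side collapse to exactly $\delta_{Y_\varepsilon}(X_\varepsilon)+\tfrac12 n_\varepsilon w(\C_{X_\varepsilon})$; the statement for $\P_{Y_\varepsilon}$ follows by the same argument with the roles of $X_\varepsilon$ and $Y_\varepsilon$ exchanged. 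The only delicate point is the averaging step: one must check that the three offset patterns exactly partition the cycle edges (so that the three correction terms add up to the full cycle sum) and that no vertex serves as a terminal twice, both of which follow from $q$ being a multiple of $3$ and the deleted edges being non-adjacent.
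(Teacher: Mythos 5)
Your proposal is correct and is essentially the paper's argument: the paper bounds $\delta_{Y_\varepsilon}(\P_C)+n_\varepsilon w(\P_C)$ by the expectation $\frac{2}{3}\sum_{x\in C}\delta_{Y_\varepsilon}(x)+\frac{2}{3}n_\varepsilon w(C)$ over a uniformly random initial vertex and then sums over $\C$ and applies Lemma~\ref{f1}, exactly as you do. Your deterministic averaging over the three offset classes $D_0,D_1,D_2$ is just the derandomized phrasing of the same ``minimum at most average'' step (the paper itself notes that only $3$ initial-vertex cases are distinct), with your explicit decomposition identity making the computation transparent.
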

\begin{proof}
Consider the algorithm for $\P_{X_\varepsilon}$. As mentioned, its running time is $O(n^3)$. Then, we get an upper bound of $\delta_{Y_\varepsilon}(\P_{X_\varepsilon})+n_\varepsilon w(\P_{X_\varepsilon})$.

In Step 3, for each cycle $C\in\C$, if we choose the initial vertex uniformly at random, we can get a set of 3-paths $\P'_C$ such that $\EE{\delta_{Y_\varepsilon}(\P'_{C})}=\frac{2}{3}\sum_{x\in C}\delta_{Y_\varepsilon}(x)$, as there are $\frac{2}{3}\cdot\size{C}$ terminals in $\P'_C$ and the probability of each vertex of $C$ being one of the terminals is $\frac{2}{3}$. Moreover, since there are $\frac{2}{3}\cdot\size{C}$ edges in $\P'_C$, we also have $\EE{w(\P'_C)}=\frac{2}{3}w(C)$. Hence, we have 
\[
\EE{\delta_{Y_\varepsilon}(\P'_C)+n_\varepsilon w(\P'_C)} = \frac{2}{3}\sum_{x\in C}\delta_{Y_\varepsilon}(x)+\frac{2}{3}n_\varepsilon w(C).
\]

Since the algorithm selects the initial vertex to find $\P_C$ such that $\delta_{Y_\varepsilon}(\P_C)+n_\varepsilon w(\P_C)$ is minimized, it follows that 
\[
{\delta_{Y_\varepsilon}(\P_C)+n_\varepsilon w(\P_C)}\leq\EE{\delta_{Y_\varepsilon}(\P'_C)+n_\varepsilon w(\P'_C)}= \frac{2}{3}\sum_{x\in C}\delta_{Y_\varepsilon}(x)+\frac{2}{3}n_\varepsilon w(C).
\]
Recall that $\P_{X_\varepsilon}=\bigcup_{C\in\C} \P_C$. Thus, we have
\begin{align*}
\delta_{Y_\varepsilon}(\P_{X_\varepsilon})+n_\varepsilon w(\P_{X_\varepsilon})&\leq\sum_{C\in\C}\lrA{\frac{2}{3}\sum_{x\in C}\delta_{Y_\varepsilon}(x)+\frac{2}{3}n_\varepsilon w(C)}\\
&=\frac{2}{3}\delta_{Y_\varepsilon}(X_\varepsilon)+\frac{2}{3}n_\varepsilon w(\C)\\
&\leq\frac{2}{3}\delta_{Y_\varepsilon}(X_\varepsilon)+\frac{2}{3}\lrA{\frac{1}{2}\delta_{Y_\varepsilon}(X_\varepsilon)+\frac{3}{4}n_\varepsilon w(\C_{X_\varepsilon})}\\
&=\delta_{Y_\varepsilon}(X_\varepsilon)+\frac{1}{2}n_\varepsilon w(\C_{X_\varepsilon}),
\end{align*}
where the first equality follows from the fact that $\C$ is a cycle packing in $G[X_\varepsilon]$ and the second inequality follows from Lemma~\ref{f1}.

Similarly, there also exists a $O(n^3)$-time algorithm to get a 3-path packing $\P_{Y_\varepsilon}$ in $G[Y_\varepsilon]$ with $\delta_{X_\varepsilon}(\P_{Y_\varepsilon})+n_\varepsilon w(\P_{Y_\varepsilon})\leq \delta_{X_\varepsilon}(Y_\varepsilon)+\frac{1}{2}n_\varepsilon w(\C_{Y_\varepsilon})$.
\end{proof}

\begin{theorem}\label{main}
For BTTP with any $n$ and any constant $\varepsilon>0$, there is a randomized $(3/2+\varepsilon)$-approximation algorithm with a running time of $O(n^3)$.
\end{theorem}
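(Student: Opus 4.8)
The plan is to assemble the components built in Sections~\ref{SC.3} and~\ref{SC.4} into one pipeline, treating the two size regimes of $n$ separately. Fix the target accuracy $\varepsilon>0$ and set $d\coloneqq 6\ceil{1/\varepsilon}$ as before, so that the threshold $108d^3$ is a constant depending only on $\varepsilon$. If $n<108d^3=O_\varepsilon(1)$, then there are only $O_\varepsilon(1)$ teams and hence only a constant number of candidate feasible schedules; I would enumerate them all and output an optimal one. This takes $O_\varepsilon(1)=O(1)$ time for fixed $\varepsilon$, comfortably within $O(n^3)$, and trivially attains ratio $1\le 3/2+\varepsilon$. From here on assume $n\ge 108d^3$, the regime in which Theorems~\ref{feasibility} and~\ref{res-construction} apply.

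For large $n$, first I would run the algorithm of Theorem~\ref{3-path} on $G[X_\varepsilon]$ and $G[Y_\varepsilon]$ to obtain, in $O(n^3)$ time, 3-path packings $\P_{X_\varepsilon}$ and $\P_{Y_\varepsilon}$ satisfying $\delta_{Y_\varepsilon}(\P_{X_\varepsilon})+n_\varepsilon w(\P_{X_\varepsilon})\le \delta_{Y_\varepsilon}(X_\varepsilon)+\tfrac12 n_\varepsilon w(\C_{X_\varepsilon})$ together with the symmetric bound for $\P_{Y_\varepsilon}$. I then feed these packings into the 3-path construction equipped with the randomized labeling of Section~\ref{SC.4.1}. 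By Theorem~\ref{feasibility} the output is a feasible schedule, and by Theorem~\ref{res-construction} its expected weight is at most
\[
\delta_{Y_\varepsilon}(\P_{X_\varepsilon})+n_\varepsilon w(\P_{X_\varepsilon})+\delta_{X_\varepsilon}(\P_{Y_\varepsilon})+n_\varepsilon w(\P_{Y_\varepsilon})+\varepsilon\cdot\OPT,
\]
incurring only $O(n^2)$ additional time.

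It remains to control the first four terms. Summing the two guarantees of Theorem~\ref{3-path} gives
\[
\delta_{Y_\varepsilon}(\P_{X_\varepsilon})+n_\varepsilon w(\P_{X_\varepsilon})+\delta_{X_\varepsilon}(\P_{Y_\varepsilon})+n_\varepsilon w(\P_{Y_\varepsilon})\le \delta_{Y_\varepsilon}(X_\varepsilon)+\tfrac12 n_\varepsilon w(\C_{X_\varepsilon})+\delta_{X_\varepsilon}(Y_\varepsilon)+\tfrac12 n_\varepsilon w(\C_{Y_\varepsilon}),
\]
and the right-hand side is, after reordering terms, precisely the quantity bounded by $\tfrac32\OPT$ in Lemma~\ref{lb2}. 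Chaining the three estimates, the expected weight of the produced schedule is at most $\tfrac32\OPT+\varepsilon\cdot\OPT=(\tfrac32+\varepsilon)\OPT$. Since computing the minimum weight cycle packings (hence the 3-path packings) dominates, the total running time is $O(n^3)$, which establishes the theorem.

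The genuinely difficult work has already been discharged in the earlier results, so the only point here requiring care is that Lemma~\ref{lb2} and Theorem~\ref{3-path} are engineered to line up term-by-term: the cost $\delta_{Y_\varepsilon}(X_\varepsilon)$ is paired with $w(\C_{X_\varepsilon})$ and $\delta_{X_\varepsilon}(Y_\varepsilon)$ with $w(\C_{Y_\varepsilon})$, so that no slack is lost when the packing upper bound meets the cycle-packing lower bound. Verifying this exact alignment, together with noting that the brute-force step handles every $n=O_\varepsilon(1)$ (including the degenerate sizes where the construction itself is inapplicable), is what converts the accumulated additive loss $\varepsilon\cdot\OPT$ into the final multiplicative ratio $(3/2+\varepsilon)$; I expect this bookkeeping, rather than any new inequality, to be the main thing to get right.
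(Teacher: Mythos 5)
Your proposal is correct and follows essentially the same route as the paper's own proof: invoke Theorem~\ref{3-path} to get the two 3-path packings in $O(n^3)$ time, apply Theorem~\ref{res-construction} (with feasibility from Theorem~\ref{feasibility}) to bound the expected weight, chain the summed packing guarantee with Lemma~\ref{lb2} to obtain $(3/2+\varepsilon)\cdot\OPT$, and dispatch the case $n<108d^3=O_\varepsilon(1)$ by exact solution in constant time. The term-by-term alignment you highlight is exactly the pairing the paper uses, so no further comment is needed.
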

\begin{proof}
By Theorem~\ref{3-path}, we can get 3-path packings $\P_{X_\varepsilon}$ in $G[X_\varepsilon]$ and $\P_{Y_\varepsilon}$ in $G[Y_\varepsilon]$ in $O(n^3)$ time with $\delta_{Y_\varepsilon}(\P_{X_\varepsilon})+n_\varepsilon w(\P_{X_\varepsilon})+\delta_{X_\varepsilon}(\P_{Y_\varepsilon})+n_\varepsilon w(\P_{Y_\varepsilon})\leq \delta_{Y_\varepsilon}(X_\varepsilon)+\frac{1}{2}n_\varepsilon w(\C_{X_\varepsilon})+\delta_{X_\varepsilon}(Y_\varepsilon)+\frac{1}{2}n_\varepsilon w(\C_{Y_\varepsilon})$. 

If $n\geq 108d^3$, by Lemma~\ref{lb2} and Theorem~\ref{res-construction}, we can obtain a solution in $O(n^2)$ time with an expected weight of at most $(3/2+\varepsilon)\cdot\OPT$ by applying $\P_{X_\varepsilon}$ and $\P_{Y_\varepsilon}$ to our 3-path construction and using the randomized labeling algorithm. Otherwise, as mentioned, we have $n< 108d^3=O_\varepsilon(1)$, and in this case, we can solve the problem optimally in constant time. Therefore, there is a randomized $(3/2+\varepsilon)$-approximation algorithm for BTTP, and the total running time is $O(n^3)$.
\end{proof}

Our algorithm is randomized because it uses a simple randomized labeling technique.
It can be derandomized in polynomial time using the well-known method of conditional expectations in~\cite{williamson2011design}, while preserving the approximation ratio. 
To show this, by the proof of Theorem~\ref{res-construction}, we only need to show how to derandomize the algorithm for $n\geq 108d^3$, i.e., how to find in polynomial time a labeling for 3-paths in $\P_{X_\varepsilon}$ and $\P_{Y_\varepsilon}$ such that using this labeling the 3-path construction generates a solution with a weight of at most 
\[
UB\coloneqq\delta_{Y_\varepsilon}(\P_{X_\varepsilon})+n_\varepsilon w(\P_{X_\varepsilon})+\delta_{X_\varepsilon}(\P_{Y_\varepsilon})+n_\varepsilon w(\P_{Y_\varepsilon})+\varepsilon\cdot\OPT.
\]

In the following, we demonstrate the main idea of the derandomization using the method of conditional expectations. Some detailed examples can be found in~\cite{miyashiro2012approximation,zmor}.

Recall that there are $md$ 3-paths in $\P_{X_\varepsilon}$ (resp., $\P_{Y_\varepsilon}$), which will be labeled using $\{P_1,..., P_{md}\}$ (resp., $\{Q_1,...,Q_{md}\}$). Previously, the algorithm uses a randomized labeling method such that for each $i\in\{1,...,$ $md\}$ each 3-path in $\P_{X_\varepsilon}$ (resp., $\P_{Y_\varepsilon}$ ) is labeled as $P_i$ (resp., $Q_i$) with the same probability of $\frac{1}{md}$. Now, we determine the \emph{random decision} on $P_1,..., P_{md}$ sequentially. 

Suppose that we have determined the labels $P_1,...,P_s$, where $0\leq s<md$, such that by randomly labeling the 3-paths in $\P_{X_\varepsilon}\setminus\{P_1,...,P_s\}$ (resp., $\P_{Y_\varepsilon}$) using $\{P_{s+1},..., P_{md}\}$ (resp., $\{Q_1,...,Q_{md}\}$) the 3-path construction generates a solution with an expected weight of $\EE{Sol\mid P_1,...,P_s}\leq UB$. Now, we want to determine the label of $P_{s+1}$ such that by randomly labeling the 3-paths in $\P_{X_\varepsilon}\setminus\{P_1,...,P_{s+1}\}$ (resp., $\P_{Y_\varepsilon}$) using $\{P_{s+2},$ $..., P_{md}\}$ (resp., $\{Q_1,...,Q_{md}\}$) the 3-path construction generates a solution with an expected weight of $\EE{Sol\mid P_1,...,P_{s+1}}\leq UB$. It can be done simply by letting 
\[
P_{s+1}=\text{arg min}_{P_{s+1}\in \P_{X_\varepsilon}\setminus\{P_1,...,P_s\}}\EE{Sol\mid P_1,...,P_{s+1}}.
\]

When calculating the conditional expected weight, we can still use the assumptions made in the proof of Theorem~\ref{res-construction}, since $UB$ is derived based on these assumptions. Thus, it is easy to check that the conditional expected weight can be computed in polynomial time, and then by recursion, the labels of 3-paths in $\P_{X_\varepsilon}$ can be determined in polynomial time.
Similarly, by sequentially determining the random decision on $Q_1,...,$ $Q_{md}$, the labels of 3-paths in $\P_{Y_\varepsilon}$ can also be determined in polynomial time.

\section{Application}\label{SC.5}
To test the performance of our algorithm, we introduce a new BTTP instance and apply our algorithm to it. 
Since the 3-path construction in Theorem~\ref{feasibility} requires $n$ to be sufficiently large, we instead use the 3-path construction in Theorem~\ref{feasibility+} for the experiments.
Additionally, we extend our 3-path construction to a \emph{3-cycle construction} by slightly modifying the design of normal super-games, which turns out to be more practical when $n$ is small.

\textbf{The New Instance.}
It is motivated by the real situation of NBA.
Since 2004, the number of teams in NBA has always been 30, where there are 15 teams in the Western Conference and 15 teams in the Eastern Conference. 
Thus, Hoshino and Kawarabayashi~\cite{Bipartite-conference} constructed an NBA instance with $n=15$ for BTTP.
In recent years, there have been rumors saying that NBA is poised to expand to 32 teams, with the potential inclusion of two new teams from Las Vegas and Seattle. Therefore, we create an instance, where we assume that two new teams from Las Vegas and Seattle join in the Western Conference, and also the Minnesota Timberwolves, originally from the Western Conference, are moved to join the Eastern Conference.
There are 32 teams in total, and hence this is an instance for BTTP with $n=16$. 
Note that the strict conditions of BTTP are not part of the NBA scheduling requirement, as evidenced by the San Antonio Spurs playing 6 consecutive home games followed by 8 consecutive away games during the 2009-10 regular season~\cite{Bipartite-conference}, as well as the NBA scheduling games on nonconsecutive days.
Next, we list their names, respectively.

The 16 teams in the Western Conference are Dallas Mavericks, Denver Nuggets, Golden State Warriors, Houston Rockets, LA Clippers, Los Angeles Lakers, Memphis Grizzlies, New Orleans Pelicans, Oklahoma City Thunder, Phoenix Suns, Portland Trail Blazers, Sacramento Kings, San Antonio Spurs, Utah Jazz, Las Vegas Team, and Seattle Team.

The 16 teams in the Eastern Conference are Atlanta Hawks, Boston Celtics, Brooklyn Nets, Charlotte Hornets, Chicago Bulls, Cleveland Cavaliers, Detroit Pistons, Indiana Pacers, Miami Heat, Milwaukee Bucks, Minnesota Timberwolves, New York Knicks, Orlando Magic, Philadelphia 76ers, Toronto Raptors, and Washington Wizards.

The home venues for the existing 30 NBA teams are available on a website\footnote{\url{https://en.wikipedia.org/wiki/List_of_National_Basketball_Association_arenas}}. The home venues for the Las Vegas and Seattle teams are estimated by considering the current arenas within their respective states. The detailed information of our new instance, including names, home venues, and the coordinates of the home venues, can be found in Tables~\ref{west} and \ref{east}, where we number teams in the Western Conference as 1,2,...,16, and teams in the Eastern Conference as 17,18,...,32, respectively. An illustration of the locations of the 32 teams in our instance can be seen in Figure~\ref{fig04}\footnote{It is adapted from the figure available at \url{https://gist.github.com/bordaigorl/fce575813ff943f47505}.}.

\begin{table}[ht]
\centering
\begin{tabular}{c|c|c|c}
\hline
& Team Names & Home Venues & Coordinates (Latitude, Longitude) \\
\hline
\circled{1} & \text{Dallas Mavericks} & \text{American Airlines Center} & (32.7904, -96.8103) \\
\circled{2} & \text{Denver Nuggets} & \text{Ball Arena} & (39.7487, -105.0076) \\
\circled{3} & \text{Golden State Warriors} & \text{Chase Center} & (37.7680, -122.3879) \\
\circled{4} & \text{Houston Rockets} & \text{Toyota Center} & (29.7509, -95.3622) \\
\circled{5} & \text{LA Clippers} & \text{Crypto.com Arena} & (34.0430, -118.2673) \\
\circled{6} & \text{Los Angeles Lakers} & \text{Crypto.com Arena} & (34.0430, -118.2673) \\
\circled{7} & \text{Memphis Grizzlies} & \text{FedExForum} & (35.1382, -90.0506) \\
\circled{8} & \text{New Orleans Pelicans} & \text{Smoothie King Center} & (29.9490, -90.0821) \\
\circled{9} & \text{Oklahoma City Thunder} & \text{Paycom Center} & (35.4634, -97.5151) \\
\circled{10} & \text{Phoenix Suns} & \text{Footprint Center} & (33.4457, -112.0712) \\
\circled{11} & \text{Portland Trail Blazers} & \text{Moda Center} & (45.5316, -122.6668) \\
\circled{12} & \text{Sacramento Kings} & \text{Golden 1 Center} & (38.5802, -121.4997) \\
\circled{13} & \text{San Antonio Spurs} & \text{Frost Bank Center} & (29.4270, -98.4375) \\
\circled{14} & \text{Utah Jazz} & \text{Delta Center} & (40.7683, -111.9011) \\
\circled{15} & \text{Las Vegas Team} & \text{T-Mobile Arena} & (36.1028, -115.1782) \\
\circled{16} & \text{Seattle Team} & \text{Climate Pledge Arena} & (47.6221, -122.3541) \\
\end{tabular}
\caption
{
Information of 16 Teams in the Western Conference - Team Names, Home Venues, and Home Venue Coordinates.
}
\label{west}
\end{table}

\begin{table}[ht]
\centering
\begin{tabular}{c|c|c|c}
\hline
& Team Names & Home Venues & Coordinates (Latitude, Longitude) \\
\hline
\circled{17} & \text{Atlanta Hawks} & \text{State Farm Arena} & (33.7573, -84.3963) \\
\circled{18} & \text{Boston Celtics} & \text{TD Garden} & (42.3662, -71.0621) \\
\circled{19} & \text{Brooklyn Nets} & \text{Barclays Center} & (40.6826, -73.9754) \\
\circled{20} & \text{Charlotte Hornets} & \text{Spectrum Center} & (35.2252, -80.8393) \\
\circled{21} & \text{Chicago Bulls} & \text{United Center} & (41.8807, -87.6742) \\
\circled{22} & \text{Cleveland Cavaliers} & \text{Rocket Mortgage FieldHouse} & (41.4965, -81.6880) \\
\circled{23} & \text{Detroit Pistons} & \text{Little Caesars Arena} & (42.3411, -83.0552) \\
\circled{24} & \text{Indiana Pacers} & \text{Gainbridge Fieldhouse} & (39.7640, -86.1555) \\
\circled{25} & \text{Miami Heat} & \text{Kaseya Center} & (25.7814, -80.1870) \\
\circled{26} & \text{Milwaukee Bucks} & \text{Fiserv Forum} & (43.0451, -87.9173) \\
\circled{27} & \text{Minnesota Timberwolves} & \text{Target Center} & (44.9795, -93.2760) \\
\circled{28} & \text{New York Knicks} & \text{Madison Square Garden} & (40.7505, -73.9934) \\
\circled{29} & \text{Orlando Magic} & \text{Kia Center} & (28.5381, -81.3841) \\
\circled{30} & \text{Philadelphia 76ers} & \text{Wells Fargo Center} & (39.9012, -75.1720) \\
\circled{31} & \text{Toronto Raptors} & \text{Scotiabank Arena} & (43.6435, -79.3790) \\
\circled{32} & \text{Washington Wizards} & \text{Capital One Arena} & (38.8982, -77.0208) \\
\end{tabular}
\caption
{
Information of 16 Teams in the Eastern Conference - Team Names, Home Venues, and Home Venue Coordinates.
}
\label{east}
\end{table}

\begin{figure}[ht]
    \centering
    \includegraphics[scale=0.6]{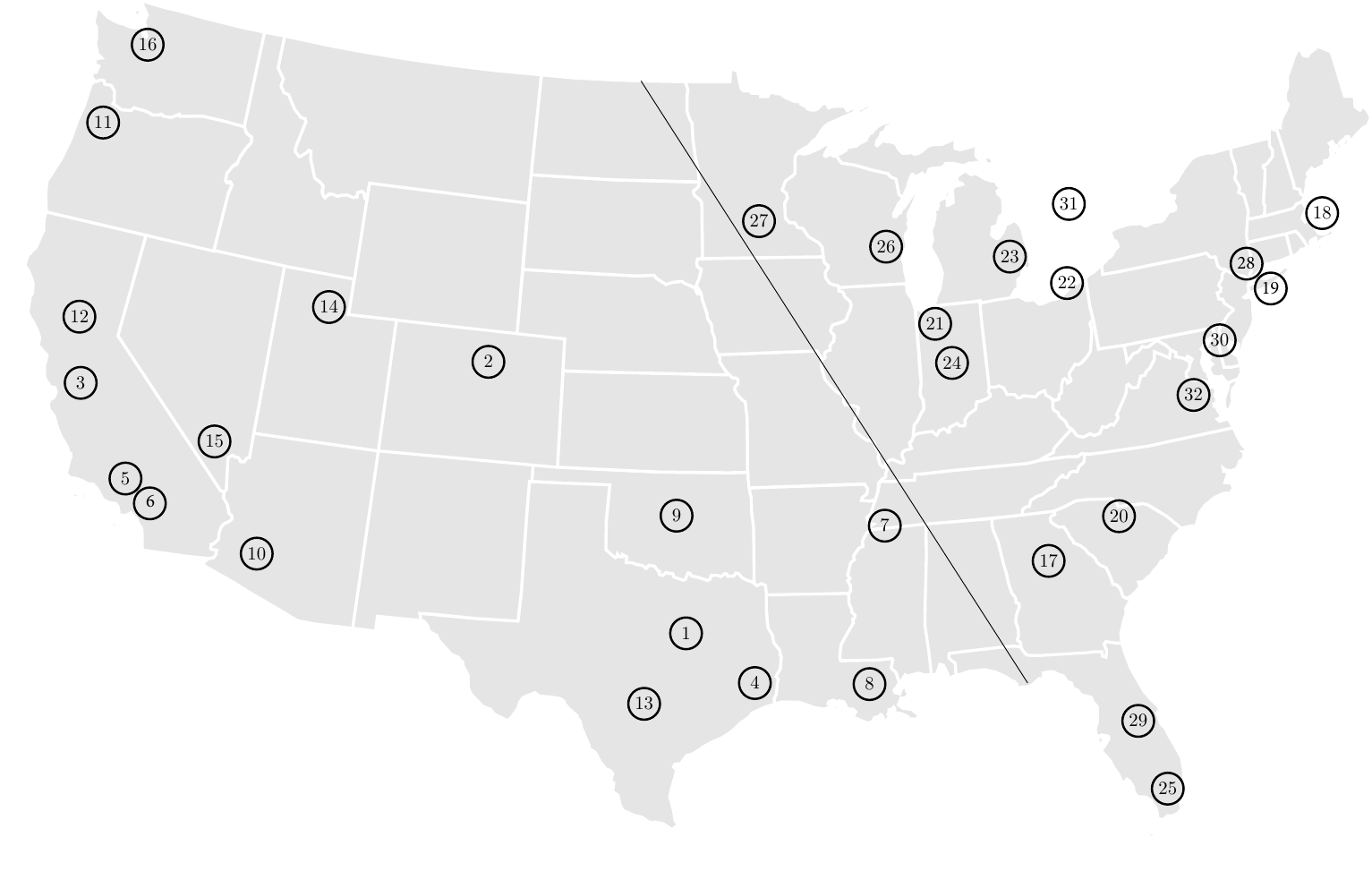}
    \caption{An illustration of the locations of the 32 teams in our instance.}
    \label{fig04}
\end{figure}

We determined their distance matrix using the following method: first, we surveyed the home venues of the existing 30 teams and made educated guesses about the home venues of the two new teams; then, we obtained the longitude and latitude coordinates of these home venues using Google Maps; last, we calculated the distances (in miles, using double precision) between any pair of teams based on the coordinates using the Haversine formula~\cite{sinnott1984virtues}. 

\textbf{The Parameters.}
We set $d=1$ and $m=5$. Then, we have $l=1$, and the games in the last time slot may violate the no-repeat constraint. As in the proof of Theorem~\ref{feasibility+}, if we denote the games on $2n$ days in our 3-path construction as $g_1g_2\dots g_{2n}$, we can rearrange them as $g_{2n}g_1g_2\dots g_{2n-1}$ to fix this problem.

\textbf{The 3-Cycle Construction.} Our 3-path construction can be easily extended to pack 3-cycles. The main difference is that the design of normal super-games becomes different.

For the case of $d=1$, we modify the design of normal super-games as follows. Assume that there is a normal super-game between $S=\{s_0,s_1,s_2\}$ and $T=\{t_0,t_1,t_2\}$.
In the extension, we define $m_i\coloneqq\{s_{i'}\rightarrow t_{(i'+i)\bmod 3}\}_{i'=0}^{2}$, and arrange games on 6 days in the order of $m_0m_1m_2\overline{m_0m_1m_2}$. An illustration of the regular games after extending one normal super-game for the 3-cycle construction is shown in Table~\ref{normal-cycle}.

\begin{table}[ht]
\centering
\begin{tabular}{c|*{6}{c}}
  & $0$ & $1$ & $2$ & $3$ & $4$ & $5$\\
\hline
  $s_0$ & $t_0$ & $t_1$ & $t_2$ & $\pmb{t_0}$ & $\pmb{t_1}$ & $\pmb{t_2}$\\
  $s_1$ & $t_1$ & $t_2$ & $t_0$ & $\pmb{t_1}$ & $\pmb{t_2}$ & $\pmb{t_0}$\\
  $s_2$ & $t_2$ & $t_0$ & $t_1$ & $\pmb{t_2}$ & $\pmb{t_0}$ & $\pmb{t_1}$\\
\hline
  $t_0$ & $\pmb{s_0}$ & $\pmb{s_2}$ & $\pmb{s_1}$ & $s_0$ & $s_2$ & $s_1$\\
  $t_1$ & $\pmb{s_1}$ & $\pmb{s_0}$ & $\pmb{s_2}$ & $s_1$ & $s_0$ & $s_2$\\
  $t_2$ & $\pmb{s_2}$ & $\pmb{s_1}$ & $\pmb{s_0}$ & $s_2$ & $s_1$ & $s_0$\\
\end{tabular}
\caption{
Extending the normal super-game for the 3-cycle construction between $\{s_0,s_1,s_2\}$ and $\{t_0,t_1,t_2\}$ into regular games on $6$ days, where home games are marked in bold.
}
\label{normal-cycle}
\end{table}

In the extension of the new normal super-game, each team in $S$ (resp., $T$) plays 3-consecutive away games along two edges in the 3-cycle $t_0t_1t_2t_0$ (resp., $s_0s_1s_2s_0$). 
Moreover, on each day teams in the same league either all play home games or all play away games. Due to this, before extending one new normal super-game between super-teams $S$ and $T$, we may relabel teams in $S$ and teams $T$. There are $3!\times 3!=36$ ways, and we choose the best one so that the total traveling distance of teams in $S\cup T$ is minimized on the extended 6 days. We will recover their labels after extending the new normal super-game, and the feasibility still holds. 
Note that this property does not hold for the previous normal super-game.

For the case of $d>1$, the normal super-game between $S=\{s_0,s_1,s_2,...,s_{3d-3},s_{3d-2},s_{3d-1}\}$ and $T=\{t_0,t_1,t_2,...,t_{3d-3},t_{3d-2},t_{3d-1}\}$ can be reduced to the case of $d=1$ as follows. 
For each $i\in\{0,1,...,d-1\}$, let $SC_i=\{s_{3i},s_{3i+1},s_{3i+2}\}$ and $TC_i=\{t_{3i},t_{3i+1},t_{3i+2}\}$, which denote 3-cycles with respect to $S$ and $T$, respectively. Therefore, we have $S=\bigcup_{0\leq i<d}SC_i$ and $T=\bigcup_{0\leq j<d}TC_j$. In the extension, there are $d$ \emph{sub-slots}. In the $j$th \emph{sub-slot} where $j\in\{0,1,...,d-1\}$, we arrange the pairs $\{SC_i-TC_{(i+j)\bmod d}\}_{i=0}^{d-1}$, where $SC_i-TC_{(i+j)\bmod d}$ corresponds to the normal super-game in the case of $d=1$ and will be further extended to regular games on 6 days. Thus, the normal super-game will be extend to regular games on $6d$ days.

\begin{remark}
The 3-cycle construction mainly modifies the design of normal super-games, and therefore, it requires the same assumptions on $m$ and $d$. It may be more practical for small $n$ because the design of its normal super-games also reduces the frequency of returning homes for involved regular teams.
\end{remark}

\textbf{Optimizing Left Super-games.} The design of left super-games in our 3-path and 3-cycle constructions can be optimized. Recall that we arrange the games in left super-game as $m_1\overline{m_2}m_3\overline{m_1}m_2\overline{m_3}$. To reduce the frequency of returning homes, we rearrange them as $m_1m_2\overline{m_3m_1}m_3\overline{m_2}$ for the 3-path construction, and $m_1m_2m_3\overline{m_1m_2m_3}$ for the 3-cycle construction. The feasibility also holds.

\textbf{The Implementations.}
Instead of finding a 3-path packing using the algorithm in Theorem~\ref{3-path}, we directly label teams randomly, and try to improve our schedule by exchanging the labels of two teams in the same league. 
There are $\binom{n}{2}$ pairs of teams in $X$ and $\binom{n}{2}$ pairs of teams in $Y$.
We consider these $n(n-1)$ pairs in a random order.
From the first pair to the last, we test whether the weight of our schedule can be reduced after we swap the labels of teams in the pair. If no, we do not swap them and go to the next pair. If yes, we swap them and go to the next pair. After considering all the $n(n-1)$ pairs, if there is an improvement, we repeat the whole procedure. Otherwise, the procedure ends.

The main reasons for using the local search method are as follows.
First, local search is easy to implement because it does not require the computation of the minimum weight cycle packing, yet it generally produces schedules of very high quality.
Second, even with the local search method, the algorithms run quickly since the sizes of TTP/BTTP instances are typically small (e.g., TTP benchmark instances~\cite{trick2007challenge,DBLP:journals/eor/BulckGSG20} have a maximum size of only 40).
Finally, the local search method primarily involves exchanging the labels of two teams without altering the overall structure of the 3-path/cycle construction. As a result, the resulting schedules are ``isomorphic'', preserving the same underlying structure as the original 3-path/cycle construction, with the final schedule reflecting the novelty of the construction.

\begin{remark}
While local search theoretically sacrifices the approximation guarantee, we believe that in practice, it can yield significantly better solutions.
\end{remark}

Our algorithms are coded in C++, on a standard desktop computer with a 3.20GHz AMD Athlon 200GE CPU and 8 GB RAM. 
The details of our algorithms can be found in \url{https://github.com/JingyangZhao/BTTP}.

To better highlight the quality of our experimental results, we compare them with the well-known \emph{Independent Lower Bound}~\cite{easton2003solving} , as it is of high quality and remains the best-known lower bound for several large benchmark instances of TTP~\cite{DBLP:journals/eor/BulckGSG20}.
The idea is to compute the best-possible traveling distance of a single team $v$ (i.e., the traveling distance of its minimum weight itinerary) independently without considering the feasibility of other teams. The value is denoted as $ILB_v$, and the Independent Lower Bound is then defined as $ILB\coloneqq\sum_{v\in X\cup Y}ILB_v$. 
It is easy to observe that the lower bounds in Lemmas~\ref{lb1} and \ref{lb2} are not stronger than the Independent Lower Bound, as these lower bounds are derived by analyzing the lower bound of each team's (optimal) itinerary and summing them, and hence they are further lower bounds of the Independent Lower Bound. 
Note that $ILB_v$ can be obtained by solving the Capacitated Vehicle Routing Problem (CVRP)~\cite{dantzig1959truck}. For example, if $v\in X$, we obtain a CVRP instance, where a vehicle with a capacity of 3 is located at $v$, each team in $Y$ has a unit demand, and the goal is to find a minimum itinerary for the vehicle to fulfill the demand of the teams in $Y$.
To solve this CVRP instance, we simply use a brute-force enumeration since the instance size is small. We have $ILB=655477.159$. Note that the brute-force enumeration takes approximately half an hour.

Our results can be seen in Table~\ref{experiment}, where the column `\emph{Construction}' indicates the 3-path and the 3-cycle constructions; `\emph{Result}' lists the results of our algorithms; `\emph{Gap}' is defined as $\frac{Result~-~ILB}{ILB}$, and `\emph{Time}' is the running time of our algorithms.

\begin{table}
\centering
\begin{tabular}{c|c|c|c}
\textbf{Construction} & \textbf{Result} & \textbf{Gap} & \textbf{Time}\\
\hline
3-path & 817088.498 & 24.66\% & 0.85s\\
\hline 
3-cycle & 717174.266 & 9.42\% & 1.03s\\
\end{tabular}
\caption{Experimental results of our 3-path and 3-cycle constructions.}
\label{experiment}
\end{table}

We can see that both of our algorithms run very fast. Compared to $ILB$, the quality of the 3-path construction has a gap of $24.66\%$, which is much smaller than the desired 1.5-approximation ratio. Moreover, the 3-cycle construction is more practical, which can reduce the gap to only $9.42\%$.

\section{Conclusion}\label{SC.6}
In this paper, for BTTP with any $n$ and any constant $\varepsilon>0$, we propose a $(3/2+\varepsilon)$-approximation algorithm, which significantly improves the previous result.
Our theoretical result relies on three key ideas: the first is a new 3-path construction, the second is a new lower bound, and the third is a 3-path packing algorithm. Our methods also have the potential to design better approximation algorithms for TTP, which is left for future work. 
For applications, we create a new real instance from NBA for BTTP. Experimental results show that the 3-cycle construction has a very good practical performance.
Our method is also possibly used to solve BTTP with the maximum number of consecutive home-games or away-games bounded by $k>3$. However, for larger $k$, it is not easy to compute a $k$-path packing good enough to obtain a desired approximation ratio and the analysis of similar lower bounds becomes more complicated.

\section*{Acknowledgments}
A preliminary version of this paper was presented at the 33rd International Joint Conference on Artificial Intelligence (IJCAI 2024)~\cite{DBLP:conf/ijcai/0001024a}.

\bibliographystyle{plain}
\bibliography{zmain}

\begin{thebibliography}{10}

\bibitem{anagnostopoulos2006simulated}
Aris Anagnostopoulos, Laurent Michel, Pascal Van~Hentenryck, and Yannis Vergados.
\newblock A simulated annealing approach to the traveling tournament problem.
\newblock {\em Journal of Scheduling}, 9(2):177--193, 2006.

\bibitem{DBLP:journals/eor/BulckGSG20}
David~Van Bulck, Dries~R. Goossens, J{\"{o}}rn Sch{\"{o}}nberger, and Mario Guajardo.
\newblock Robinx: {A} three-field classification and unified data format for round-robin sports timetabling.
\newblock {\em European Journal of Operational Research}, 280(2):568--580, 2020.

\bibitem{DBLP:conf/atmos/ChatterjeeR21}
Diptendu Chatterjee and Bimal~Kumar Roy.
\newblock An improved scheduling algorithm for traveling tournament problem with maximum trip length two.
\newblock In {\em {ATMOS} 2021}, volume~96, pages 16:1--16:15, 2021.

\bibitem{dantzig1959truck}
George~B Dantzig and John~H Ramser.
\newblock The truck dispatching problem.
\newblock {\em Management Science}, 6(1):80--91, 1959.

\bibitem{di2007composite}
Luca Di~Gaspero and Andrea Schaerf.
\newblock A composite-neighborhood tabu search approach to the traveling tournament problem.
\newblock {\em Journal of Heuristics}, 13(2):189--207, 2007.

\bibitem{duran2021sports}
Guillermo Dur{\'a}n.
\newblock Sports scheduling and other topics in sports analytics: a survey with special reference to latin america.
\newblock {\em Top}, 29(1):125--155, 2021.

\bibitem{easton2001traveling}
Kelly Easton, George Nemhauser, and Michael Trick.
\newblock The traveling tournament problem: description and benchmarks.
\newblock In {\em {CP} 2001}, pages 580--584, 2001.

\bibitem{easton2003solving}
Kelly Easton, George~L. Nemhauser, and Michael~A. Trick.
\newblock Solving the travelling tournament problem: {A} combined integer programming and constraint programming approach.
\newblock In {\em {PATAT} 2002}, volume 2740 of {\em Lecture Notes in Computer Science}, pages 100--109. Springer, 2002.

\bibitem{frohner2023approaching}
Nikolaus Frohner, Bernhard Neumann, Giulio Pace, and G{\"u}nther~R Raidl.
\newblock Approaching the traveling tournament problem with randomized beam search.
\newblock {\em Evolutionary Computation}, 31(3):233--257, 2023.

\bibitem{goerigk2014solving}
Marc Goerigk, Richard Hoshino, Ken-ichi Kawarabayashi, and Stephan Westphal.
\newblock Solving the traveling tournament problem by packing three-vertex paths.
\newblock In {\em {AAAI} 2014}, pages 2271--2277. {AAAI} Press, 2014.

\bibitem{DBLP:journals/anor/GoerigkW16}
Marc Goerigk and Stephan Westphal.
\newblock A combined local search and integer programming approach to the traveling tournament problem.
\newblock {\em Annals of Operations Research}, 239(1):343--354, 2016.

\bibitem{hartvigsen1984extensions}
David Hartvigsen.
\newblock {\em Extensions of matching theory}.
\newblock PhD thesis, Carnegie-Mellon University, 1984.

\bibitem{DBLP:conf/aaai/HentenryckV07}
Pascal~Van Hentenryck and Yannis Vergados.
\newblock Population-based simulated annealing for traveling tournaments.
\newblock In {\em {AAAI} 2007}, pages 267--272. {AAAI} Press, 2007.

\bibitem{hoshino2011distance}
Richard Hoshino and Ken-ichi Kawarabayashi.
\newblock The distance-optimal inter-league schedule for japanese pro baseball.
\newblock In {\em Proceedings of the ICAPS 2011 Workshop on Constraint Satisfaction Techniques for Planning and Scheduling Problems (COPLAS)}, pages 71--78, 2011.

\bibitem{Bipartite-conference}
Richard Hoshino and Ken-ichi Kawarabayashi.
\newblock The inter-league extension of the traveling tournament problem and its application to sports scheduling.
\newblock In {\em {AAAI} 2011}, pages 977--984. {AAAI} Press, 2011.

\bibitem{Bipartite-jornal}
Richard Hoshino and Ken-ichi Kawarabayashi.
\newblock Scheduling bipartite tournaments to minimize total travel distance.
\newblock {\em Journal of Artificial Intelligence Research}, 42:91--124, 2011.

\bibitem{hoshino2012linear}
Richard Hoshino and Ken-ichi Kawarabayashi.
\newblock The linear distance traveling tournament problem.
\newblock In {\em {AAAI} 2012}, pages 1770--1778. {AAAI} Press, 2012.

\bibitem{hoshino2013approximation}
Richard Hoshino and Ken-ichi Kawarabayashi.
\newblock An approximation algorithm for the bipartite traveling tournament problem.
\newblock {\em Mathematics of Operations Research}, 38(4):720--728, 2013.

\bibitem{imahori20211+}
Shinji Imahori.
\newblock A {1+O(1/N)} approximation algorithm for {TTP(2)}.
\newblock {\em CoRR}, abs/2108.08444, 2021.

\bibitem{kendall2010scheduling}
Graham Kendall, Sigrid Knust, Celso~C Ribeiro, and Sebasti{\'a}n Urrutia.
\newblock Scheduling in sports: An annotated bibliography.
\newblock {\em Computers \& Operations Research}, 37(1):1--19, 2010.

\bibitem{knuth1997art}
Donald~Ervin Knuth.
\newblock {\em The art of computer programming}, volume~3.
\newblock Pearson Education, 1997.

\bibitem{lim2006simulated}
Andrew Lim, Brian Rodrigues, and Xingwen Zhang.
\newblock A simulated annealing and hill-climbing algorithm for the traveling tournament problem.
\newblock {\em European Journal of Operational Research}, 174(3):1459--1478, 2006.

\bibitem{miyashiro2012approximation}
Ryuhei Miyashiro, Tomomi Matsui, and Shinji Imahori.
\newblock An approximation algorithm for the traveling tournament problem.
\newblock {\em Annals of Operations Research}, 194(1):317--324, 2012.

\bibitem{sinnott1984virtues}
Roger~W Sinnott.
\newblock Virtues of the haversine.
\newblock {\em Sky and Telescope}, 68(2):158, 1984.

\bibitem{thielen2011complexity}
Clemens Thielen and Stephan Westphal.
\newblock Complexity of the traveling tournament problem.
\newblock {\em Theoretical Computer Science}, 412(4):345--351, 2011.

\bibitem{thielen2012approximation}
Clemens Thielen and Stephan Westphal.
\newblock Approximation algorithms for \protect{TTP}(2).
\newblock {\em Mathematical Methods of Operations Research}, 76(1):1--20, 2012.

\bibitem{trick2007challenge}
Michael Trick.
\newblock Challenge traveling tournament instances.
\newblock \url{https://mat.tepper.cmu.edu/TOURN/}, 2022.
\newblock Accessed: 2022-12-31.

\bibitem{westphal2014}
Stephan Westphal and Karl Noparlik.
\newblock A 5.875-approximation for the traveling tournament problem.
\newblock {\em Annals of Operations Research}, 218(1):347--360, 2014.

\bibitem{williamson2011design}
David~P Williamson and David~B Shmoys.
\newblock {\em The design of approximation algorithms}.
\newblock Cambridge University Press, 2011.

\bibitem{DBLP:conf/mfcs/XiaoK16}
Mingyu Xiao and Shaowei Kou.
\newblock An improved approximation algorithm for the traveling tournament problem with maximum trip length two.
\newblock In {\em {MFCS} 2016}, volume~58 of {\em LIPIcs}, pages 89:1--89:14, 2016.

\bibitem{yamaguchi2009improved}
Daisuke Yamaguchi, Shinji Imahori, Ryuhei Miyashiro, and Tomomi Matsui.
\newblock An improved approximation algorithm for the traveling tournament problem.
\newblock {\em Algorithmica}, 61(4):1077--1091, 2011.

\bibitem{DBLP:conf/cocoon/ZhaoX21}
Jingyang Zhao and Mingyu Xiao.
\newblock A further improvement on approximating {TTP-2}.
\newblock In {\em {COCOON} 2021}, volume 13025 of {\em Lecture Notes in Computer Science}, pages 137--149, 2021.

\bibitem{DBLP:conf/ijcai/ZhaoX21}
Jingyang Zhao and Mingyu Xiao.
\newblock The traveling tournament problem with maximum tour length two: {A} practical algorithm with an improved approximation bound.
\newblock In {\em {IJCAI} 2021}, pages 4206--4212, 2021.

\bibitem{pathpacking2023}
Jingyang Zhao and Mingyu Xiao.
\newblock The linear distance traveling tournament problem allows an {EPTAS}.
\newblock In {\em {AAAI} 2023}, pages 12155--12162. {AAAI} Press, 2023.

\bibitem{DBLP:conf/ijcai/0001024a}
Jingyang Zhao and Mingyu Xiao.
\newblock A better approximation for bipartite traveling tournament in inter-league sports scheduling.
\newblock In {\em {IJCAI} 2024}, pages 6814--6822. ijcai.org, 2024.

\bibitem{zhao20255}
Jingyang Zhao and Mingyu Xiao.
\newblock A 5-approximation algorithm for the traveling tournament problem.
\newblock {\em Annals of Operations Research}, 346(3):2287--2305, 2025.

\bibitem{zmor}
Jingyang Zhao and Mingyu Xiao.
\newblock Practical algorithms with guaranteed approximation ratio for traveling tournament problem with maximum tour length 2.
\newblock {\em Mathematics of Operations Research}, 50(2):910--934, 2025.

\bibitem{zhao2022improved}
Jingyang Zhao, Mingyu Xiao, and Chao Xu.
\newblock Improved approximation algorithms for the traveling tournament problem.
\newblock In {\em {MFCS} 2022}, volume 241 of {\em LIPIcs}, pages 83:1--83:15, 2022.

\end{thebibliography}
\end{document}